\newcites{supp}{Supplementary references}
\let\oldequation\equation
\let\oldendequation\endequation
\DeclareMathAlphabet{\mathpzc}{OT1}{pzc}{m}{it}  
\theoremstyle{definition}
\theoremstyle{plain}
\newtheorem{theorem}{Theorem}
\newtheorem*{proposition*}{Proposition}
\newtheorem{lemma}{Lemma}
\theoremstyle{remark}
\newtheorem*{remark*}{Remark}
\newtheorem*{terminology*}{Terminology}
\newtheorem*{notation*}{Notation}
\newcommand{\E}{\mathbb{E}}
\newcommand{\inquotes}[1]{``#1''}
\let\oldalign\align
\let\oldendalign\endalign
\renewenvironment{align}
  {\linenomathNonumbers\oldalign}
  {\oldendalign\endlinenomath}
\let\oldequation\equation
\let\oldendequation\endequation
\renewenvironment{equation}
  {\linenomathNonumbers\oldequation}
  {\oldendequation\endlinenomath}
\title{
Optimal dosing of anti-cancer treatment \\ 
under 
drug-induced plasticity \\[12pt]
}
\author{Einar Bjarki Gunnarsson$^{1,2,*}$ \and\hspace*{-6pt} Benedikt Vilji Magnússon$^{1}$ \and\hspace*{-6pt} Jasmine Foo$^{2}$}
\date{%
    \footnotesize $^1$Division of Applied Mathematics, Science Institute, University of Iceland \\[0pt]
    $^2$School of Mathematics, University of Minnesota \\[-3pt]
    * corresponding author (ebg@hi.is)
    } 
\renewenvironment{abstract}
 {\small
  \begin{center}
  \bfseries \abstractname\vspace{0pt}\vspace{0pt}
  \end{center}
  \list{}{%
    \setlength{\leftmargin}{21mm}
    \setlength{\rightmargin}{\leftmargin}%
  }%
  \item\relax}
 {\endlist}
\begin{document}

\maketitle

\begin{abstract}
While cancer has traditionally been considered a genetic disease, mounting evidence indicates an important role for non-genetic (epigenetic) mechanisms. Common anti-cancer drugs have recently been observed to induce the adoption of non-genetic drug-tolerant cell states, thereby accelerating the evolution of drug resistance. This confounds conventional high-dose treatment strategies aimed at maximal tumor reduction, since high doses can simultaneously promote non-genetic resistance. In this work, we study optimal dosing of anti-cancer treatment under drug-induced cell plasticity. We show that the optimal dosing strategy steers the tumor to a fixed equilibrium composition between sensitive and tolerant cells, while precisely balancing the trade-off between cell kill and tolerance induction. The optimal equilibrium strategy ranges from applying a low dose continuously to applying the maximum dose intermittently, depending on the dynamics of tolerance induction. We finally discuss how our approach can be integrated with {\em in vitro} data to derive patient-specific treatment insights.
\\
\end{abstract}

\noindent {\bf Keywords:} Optimal dosing schedules, model-informed cancer therapy, drug-induced persistence, phenotypic plasticity, epigenetics.

\section{Introduction}

Cancer is the result of a complex evolutionary process which 
results in cells gaining the ability to divide uncontrollably \cite{merlo2006cancer,greaves2015evolutionary}.
While genetic mutations have long been understood to be important drivers of cancer evolution, it has more recently become clear that epigenetic mechanisms such as methylation and acetylation are sufficient to fulfill the traditional hallmarks of cancer \cite{flavahan2017epigenetic,jones2007epigenomics,Brock2009,brown2002epigenomics}.
Epigenetic mechanisms are both heritable and reversible, and they can enable cells to switch 
dynamically
between two or more phenotypic states, which commonly show differential responses to drug treatment.
In fact, tumor cells in many cancer types have been observed to 
adopt reversible drug-tolerant or stem-like states {\em in vitro}, 
which enables the tumor to temporarily evade drug treatment and
sets the stage for the evolution of more permanent resistance mechanisms \cite{sharma2010chromatin,goldman2015temporally,ramirez2016diverse,Engelman2016,shaffer2017rare,su2017single,neftel2019integrative,gunnarsson2020understanding}.
While these drug-tolerant states can exist independently of treatment \cite{sharma2010chromatin,pisco2013non,shaffer2017rare}, recent evidence indicates that common anti-cancer drugs can also induce or accelerate their adoption,
a phenomenon we refer to as {\em drug-induced plasticity}
\cite{pisco2013non,goldman2015temporally,shaffer2017rare,su2017single,vipparthi2022emergence,russo2022modified}.
Drug induction
is already well known in the context of genetic mutations,
where 
both anti-cancer and 
anti-bacterial 
drugs have been observed to 
cause 
genomic instability or 
{\em mutagenesis},
referring to 
an elevated mutation rate 
occurring as a result of 
the stress imposed by the treatment
\cite{szikriszt2016comprehensive,venkatesan2017treatment,russo2019adaptive,cipponi2020mtor}.

Anti-cancer drug treatment has traditionally been administered under the \inquotes{maximum tolerated dose} (MTD) paradigm, where the goal is to eradicate the tumor as quickly as possible and minimize the probability of spontaneous resistance-conferring mutations
\cite{kerbel2004anti,norton1986norton}.
Non-genetic cell plasticity 
fundamentally
confounds the reasoning behind this strategy,
especially if it is induced by the drug.
First, if drug-tolerant states are adopted independently of the drug, it is likely that drug-tolerant cells will preexist treatment in large numbers, even if they form only a small fraction of the tumor bulk \cite{Brock2009}. 
In addition, since non-genetic mechanisms usually operate much faster than genetic mutations \cite{Brock2009,jones2007epigenomics,brown2014poised}, a non-negligible fraction of drug-sensitive cells will adopt drug tolerance under treatment.
This means that
it can be impossible to kill the tumor,
no matter how large a dose is applied \cite{gunnarsson2020understanding}.
Second, if the drug induces 
tolerance adoption
in a dose-dependent manner,
applying larger doses with the aim of maximizing cell kill
has the downside of promoting resistance evolution.
This trade-off has been explored in several recent mathematical modeling works, which generally find that continuous low-dose or intermittent strategies outperform MTD treatment \cite{greene2019mathematical,akhmetzhanov2019modelling,greene2020mathematical,kuosmanen2021drug,angelini2022model,corigliano2025optimal}.
However, these works usually focus on one potential model of drug-induced tolerance and some only consider the case of irreversible genetic resistance.
In addition, many of these works only consider constant or intermittent strategies.
Overall, there is a lack of generalized insights into
how different potential forms of drug-induced tolerance influence 
optimal treatment decisions.

\begin{figure}
    \centering
    \includegraphics[scale=0.9]{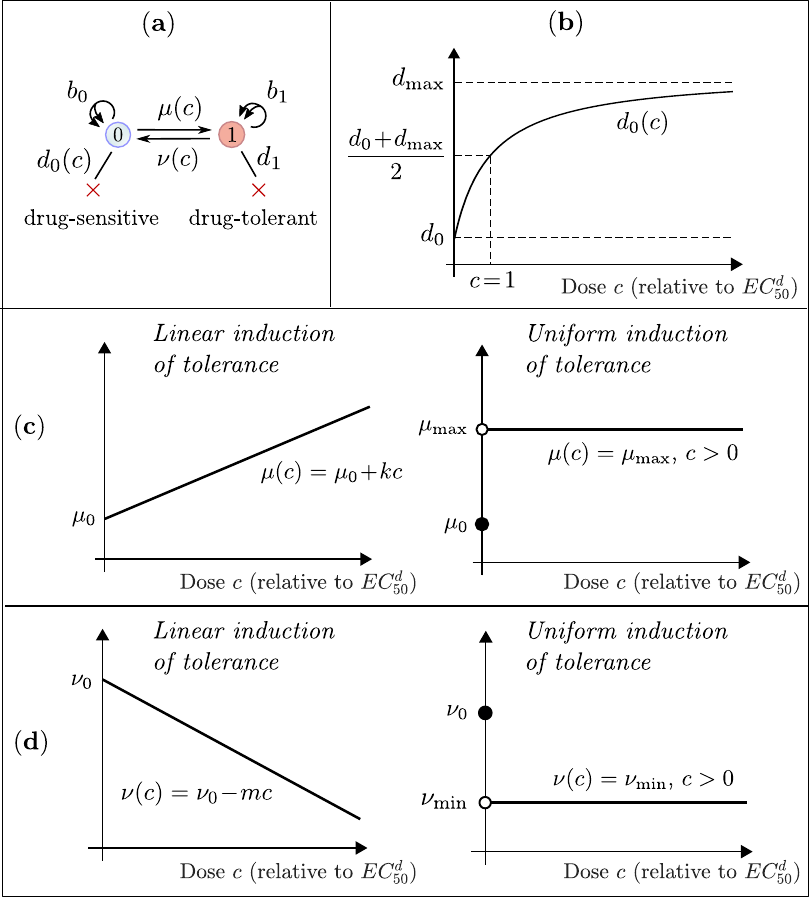}
    \caption{{\bf Mathematical model of drug-induced tolerance.}
    {\bf (a)} In the mathematical model, cells transition between two states, a drug-sensitive (type-0) state and a drug-tolerant (type-1) state.
    Sensitive cells divide at rate $b_0$, die at rate $d_0$ and transition to tolerance at rate $\mu$.
    Tolerant cells divide at rate $b_1$, die at rate $d_1$ and transition to sensitivity at rate $\nu$.
    The sensitive cell death rate $d_0$ and the transition rates $\mu$ and $\nu$ 
    depend on the drug dose $c$ (Section \ref{sec:model}).
    {\bf (b)} The sensitive cell death rate follows a Michaelis-Menten equation of the form $d_0(c) = d_0 + (d_{\rm max}-d_0) c/(c+1)$, where $d_0$ is the death rate in the absence of drug and 
    $d_{\rm max}$ is the saturation death rate under an arbitrarily large drug dose.
    The dose $c$ is normalized to the ${\rm EC}_{50}$ dose for $d_0$, meaning 
    that the drug has half the maximal effect at dose $c=1$ (Section \ref{sec:drugeffectprolif}).
    {\bf (c)} The transition rate $\mu(c)$ from drug-sensitivity to drug-tolerance is assumed either a linearly increasing function of  $c$ or to be uniformly elevated in the presence of drug.
    These two forms of drug-induced tolerance were observed in a  recent study by Russo et al.~\cite{russo2022modified} and they also emerge as limiting cases of more general Michaelis-Menten dynamics (Section \ref{sec:drugeffectswitching}).
    {\bf (d)} The transition rate $\nu(c)$ from drug-tolerance to drug-sensitivity is similarly assumed either a linearly decreasing function of $c$ or uniformly inhibited in the presence of drug (Section \ref{sec:drugeffectswitching}).
    }
        \label{fig:model_overviews}
\end{figure}

In this work, 
we seek 
a more systematic understanding of 
optimal dosing strategies
under 
drug-induced cell plasticity.
We seek both
a qualitative understanding of the
characteristics of optimal strategies 
and 
simple quantitative methods for computing them.
We consider many different possible forms of drug-induced tolerance and explore how and why different forms lead to different optimal strategies.
Our work differs from previous work 
in several important ways.
First, whereas prior work generally assumes that tolerance
is induced through elevated transitions from drug-sensitivity to drug-tolerance, it is also biologically plausible that the drug inhibits 
transitions from tolerance back to sensitivity,
or that it affects both transitions simultaneously.
For example, hypermethylation in the promoter regions of tumor suppressor genes has been associated with gene silencing and drug resistance in many cancers \cite{merlo2006cancer,jones2007epigenomics,esteller2002cpg,romero2020role}.
When a cell divides, existing methylation patterns are preserved by the maintence methyltransferase DNMT1, while {\em de novo} methylation is carried out by the methyltransferases DNMT3A/3B \cite{okano1999dna,esteller2008epigenetics}.
If a drug-tolerant state is conferred via methylation, 
drug-induced upregulation of DNMT3A/3B will 
induce the adoption of the tolerant state, while drug-induced upregulation of DNMT1 will 
inhibit transitions out of it.
The notion that the drug may not only elevate transitions 
from sensitivity to tolerance is further supported by 
experimental evidence indicating that the drug can
influence all possible phenotypic transitions simultaneously
\cite{goldman2015temporally,su2017single,vipparthi2022emergence}.
Second, whereas many prominent prior works assume that the level of tolerance induction is linear as a function of the drug dose, we also consider the case where tolerance is induced uniformly 
in the presence of drug.
These two forms of drug-induced tolerance were observed in a recent experimental work by Russo et al.~\cite{russo2022modified}, and they emerge as limiting cases of a more general Michaelis-Menten form of tolerance induction as is discussed in 
Section \ref{sec:drugeffectswitching}.
Third, while most previous work is computational in nature,
we establish mathematical results which provide generalized insights 
that apply across all biologically relevant parameter regimes. 
Finally, whereas some previous works focus on 
irreversible transitions from drug-sensitivity to drug-resistance, 
we study the case of reversible phenotypic switching between sensitivity and tolerance.
We compare and contrast our results with previous work in more detail in Section \ref{sec:previouswork}.

\section{Results}

\begin{figure*}
    \centering
    \includegraphics[scale=1]{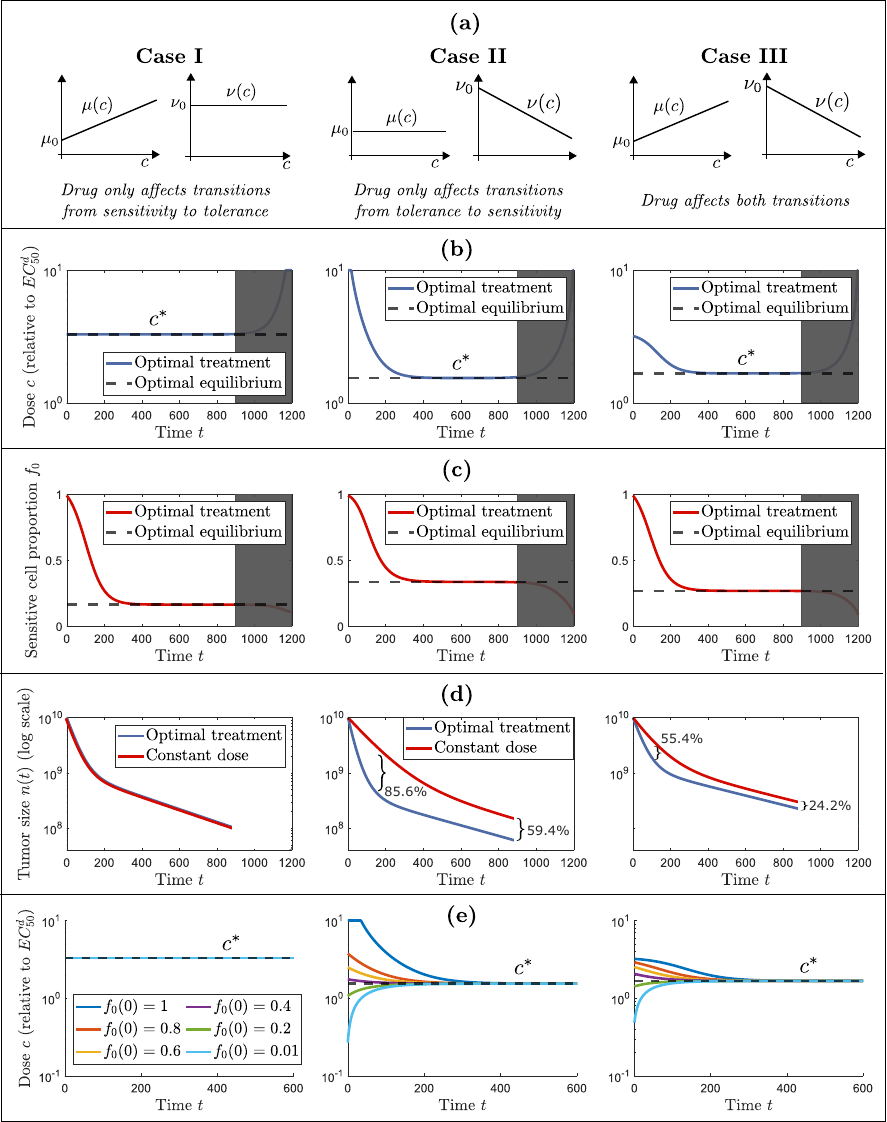}
\end{figure*}

\begin{figure*}
        \caption{
    {\bf Optimal dosing under linear induction of tolerance.}
    {\bf (a)}
In our investigation of linear induction of tolerance, we consider three cases.
In Case I, the drug only increases the transition rate from sensitivity to tolerance ($\mu(c) = \mu_0 + kc$ and $\nu(c) = \nu_0$), 
in Case II, the drug only decreases the transition rate from tolerance to sensitivity 
($\mu(c) = \mu_0$ and $\nu(c) = \nu_0 - m c$), 
and in Case III, the drug does both 
($\mu(c) = \mu_0 + kc$ and $\nu(c) = \nu_0 - m c$).
    {\bf (b)} Optimal dosing strategies under linear induction of tolerance
    computed using the forward-backward sweep method.
    In all cases, the optimal dosing strategy involves both a transient phase and an equilibrium phase.
    During the equilibrium phase, a constant dose $c^\ast$ is applied,
    which minimizes the long-run tumor growth rate $\sigma(c)$ under a constant dose $c$, as indicated by the dotted lines.
    The greyed-out part of each dosing strategy is a boundary effect which arises due to 
    the optimization goal of minimizing the tumor size at a specific time $T$,
    with no regard for what is best in the long run.
    If we were to extend the treatment horizon beyond time $T$, the 
    period applying the constant dose $c^\ast$
    would be extended (Figure \ref{fig:S1}).
    We therefore ignore this boundary effect
    and focus on what is optimal over a longer time horizon.
    {\bf (c)} Proportion of sensitive cells $f_0(t)$ over time under the optimal dosing strategy. The constant equilibrium dose $c^\ast$ maintains the tumor at a fixed proportion between sensitive and tolerant cells.
    {\bf (d)} Tumor size evolution under the optimal strategy compared with the tumor size evolution under the constant-dose strategy applying $c^\ast$ throughout.
    For Case I, 
    the two strategies are the same.
    For Cases II/III, the optimal strategy results in 85.6\%/55.4\% greater tumor reduction during the initial stages of treatment, and 59.4\%/24.2\% greater reduction long-term.
        {\bf (e)} If the proportion of sensitive cells $f_0(0)$ at the start of treatment is changed, 
        only the transient phase of the optimal treatment is affected, whereas the optimal equilibrium dose $c^\ast$ remains the same.
    For Case I, the transient phase is independent of the initial condition.
    }
            \label{fig:optimallinear}
\end{figure*}

\subsection{Mathematical model of drug-induced tolerance enables derivation of optimal dosing strategies}

We propose a mathematical model of a tumor undergoing anti-cancer treatment,
in which cells are able to transition between two cell states, a drug-sensitive (type-0) and a drug-tolerant (type-1) state.
Cells of type-$i$ divide at rate $b_i$ and die at rate $d_i$, with net division rate $\lambda_i = b_i-d_i$.
Each drug-sensitive cell is assumed to transition to the drug-tolerant state at rate $\mu$, and each tolerant cell transitions back to the sensitive state at rate $\nu$ (Figure \ref{fig:model_overviews}(a)).
To model the effect of the drug on cell proliferation, we assume that the sensitive cell death rate $d_0(c)$ increases with the drug dose $c$ (Figure \ref{fig:model_overviews}(b)).
Equivalently, it can be assumed that the drug decreases the division rate of sensitive cells (Section \ref{sec:drugeffectprolif}).
To model drug-induced tolerance, the transition rates $\mu(c)$ and $\nu(c)$ between sensitivity and tolerance are assumed to depend on the drug dose $c$, either 
in a linear or uniform fashion
(Figure \ref{fig:model_overviews}(c)-(d)). 
A detailed description of the mathematical model is provided in Sections \ref{sec:model}--\ref{sec:drugeffectswitching}.

The time evolution of the number of drug-sensitive (type-0) and drug-tolerant (type-1) cells 
can be described by the differential equations
\begin{align*}
        & \frac{dn_0}{dt} = (\lambda_0(c)-\mu(c))n_0+\nu(c)n_1, \\
    & \frac{dn_1}{dt} = (\lambda_1-\nu(c))n_1+\mu(c)n_0,
\end{align*}
where the drug dose is a function of time, $c(t)$ (Section \ref{sec:systemeqns}).
The proportion of drug-sensitive cells over time, $f_0(t) = n_0(t)/(n_0(t)+n_1(t))$, then follows the differential equation
\[
\frac{df_0}{dt} = \big(\lambda_1-\lambda_0(c)\big)f_0^2 - \big(\lambda_1-\lambda_0(c)+\mu(c)+\nu(c)\big)f_0+\nu(c).
\]
If a constant dose is applied,
$c(t)=c$ for all $t \geq 0$, 
the tumor eventually grows exponentially at a fixed rate $\sigma(c)$ given by
\begin{align} \label{eq:sigmaexplicit0}
     \sigma(c) = \frac12 \Big(&\lambda_0(c)-\mu(c) + \lambda_1-\nu(c) + \sqrt{(\lambda_0(c)-\mu(c)-\lambda_1+\nu(c))^2+4\mu(c)\nu(c)}\Big).
\end{align}
Moreover, 
the intratumor composition eventually reaches an equilibrium,
where the proportion of sensitive cells 
in the population 
becomes fixed at
\begin{align} \label{f0barcexplicit}
    \bar{f}_0(c) = \frac{\nu(c)}{\sigma(c)-\lambda_0(c)+\mu(c)+\nu(c)}.
\end{align}
Thus, under a constant dose $c$, it is possible  to maintain the tumor at a stable intratumor composition, which leads to a fixed exponential growth rate $\sigma(c)$ (Section \ref{sec:constantdoseequilibrium}).

We consider a dosing strategy $(c(t))_{t \in [0,T]}$ optimal if it
minimizes the
total number of cells, $n_0(T)+n_1(T)$, at the end of a finite time horizon $[0,T]$.
When analyzing 
which dosing strategy is optimal 
in the long run,
we also consider 
an infinite-horizon problem where the goal is to minimize the long-run average tumor growth rate (Section \ref{sec:optcontrolproblem}).
We can show that to determine the optimal strategy,
it is sufficient to consider its effect on the 
proportion of sensitive cells $f_0(t)$ over time, 
as opposed to its effect on the individual 
subpopulation counts $n_0(t)$ and $n_1(t)$ (Sections \ref{sec:systemeqns} and \ref{sec:optcontrolproblem}).
This enables us to reduce an optimal control problem involving two state variables $n_0(t)$ and $n_1(t)$ to a single-state-variable problem,
which simplifies our computational algorithms
and especially our mathematical analysis.
In line with this insight, we use the proportion of sensitive cells over time to demonstrate the effects of optimal dosing strategies on tumor evolution in the following sections.

\subsection{Constant low dosing is optimal in the long run under linear induction of tolerance} \label{sec:optlinearinduction}

We now investigate optimal dosing strategies under linear induction of tolerance.
We consider three cases, depicted in Figure \ref{fig:optimallinear}(a), depending on whether the drug only affects transitions from sensitivity to tolerance (Case I), it only affects transitions from tolerance to sensitivity (Case II), or it affects both transitions simultaneously (Case III).
In Figure \ref{fig:optimallinear}(b), we show the optimal strategy for each case, computed using our own implementation of the forward-backward sweep method in MATLAB (Section \ref{app:fbsm}).
For Case I, the optimal strategy applies a constant low dose $c^\ast$ from the beginning.
For Case II, the optimal strategy starts with the maximum dose $c_{\rm max}$, and then gradually decreases it to a constant low dose $c^\ast$.
Under Case II, higher doses can be used at the onset of treatment to reduce the predominantly sensitive tumor more quickly, without the downside of promoting transitions from sensitivity to tolerance.
For Case III, the optimal strategy has characteristics of both previous cases, as it does apply larger doses in the beginning, but it does not start with the maximum dose.

By combining these results with mathematical analysis, 
we can derive several general treatment insights.
First, the optimal dosing strategy always
involves both a transient phase and an equilibrium phase.
During the equilibrium phase, a constant dose $c^\ast$ is applied, which maintains the tumor at a fixed proportion between sensitive and tolerant cells (Figure \ref{fig:optimallinear}(c)).
This is true irrespective of the model parameter values (Section \ref{app:proofs}).
To demonstrate the importance of distinguishing these two treatment phases,
we compare in Figure \ref{fig:optimallinear}(d)
the optimal strategy with the constant-dose strategy 
applying the equilibrium dose $c^\ast$ throughout.
Eventually, the tumor reduces at the same rate 
under both strategies.
However, during the transient phase, the optimal strategy reduces the tumor much faster
for Cases II and III, and this difference in tumor size is 
maintained throughout the course of treatment.

Second, we can prove mathematically that the optimal equilibrium dose $c^\ast$ minimizes the long-run tumor growth rate $\sigma(c)$ under a constant dose $c$ (Section \ref{app:proofs}).
This means that the explicit expression \eqref{eq:sigmaexplicit0} for $\sigma(c)$ has clinical value on its own, as it can be used to address questions concerning ultimate treatment success.
For example, if for a particular cancer type and a particular 
drug, we are interested in knowing whether there exists {\em any} dosing schedule capable of driving long-run tumor reduction (however complicated), we simply need to check whether there exists a constant dose $c$ such that $\sigma(c)<0$.
We note that since the equilibrium dose $c^\ast$ is determined by long-run tumor growth dynamics, it is not affected by changes in the initial proportion between sensitive and tolerant cells
(Figure \ref{fig:optimallinear}(e)).

Third, we can say that the goal of the transient phase is to 
steer the tumor to a desired composition,
and the goal of the equilibrium phase is to maintain it at that composition.
We can show that 
during the transient phase,
once the proportion of sensitive cells reaches some value $f_0$,
the optimal dose $c$ to give at that moment must maximize
\begin{align} \label{eq:transientminimize}
    \frac{\sigma(c^\ast)-u(c,f_0)}{-f_0'},
\end{align}
where
$u(c,f_0) = (\lambda_0(c)-\lambda_1)f_0+\lambda_1$ is the instantaneous tumor growth rate (Section \ref{app:transient}).
Applying the maximum dose $c_{\rm max}$ would kill the sensitive cells as quickly as possible,
which would lead to the lowest possible growth rate $u(c,f_0)$ 
and thus the largest possible 
numerator
in \eqref{eq:transientminimize}.
However, the maximum dose would simultaneously induce a large number of sensitive cells to adopt tolerance, which would lead to a large reduction $-f_0'$ in the proportion of sensitive cells in the denominator of \eqref{eq:transientminimize}.
Overall, expression \eqref{eq:transientminimize} describes precisely how the optimal strategy must balance the trade-off between sensitive cell kill and tolerance induction as it steers the 
tumor
to the optimal equilibrium composition.

Finally, our mathematical analysis suggests a simple and efficient method to compute the optimal dosing strategy, which simultaneously provides an insight into how the doses are selected.
First, the optimal equilibrium dose $c^\ast$  can be derived by minimizing expression \eqref{eq:sigmaexplicit0}, and the associated equilibrium proportion of sensitive cells $\bar{f}_0(c^\ast)$ can be computed using expression \eqref{f0barcexplicit}.
Then, the transient phase can be computed by iteratively determining the maximizing dose $c$ in \eqref{eq:transientminimize} and updating the value of $f_0$, as visualized in Figure \ref{fig:transient_steering}(a).
In Figure \ref{fig:transient_steering}(b), we confirm that the transient phase computed using this approach agrees with the transient phase shown in Figure \ref{fig:optimallinear} (obtained using the forward-backward sweep method).

\begin{figure}
    \centering
    \includegraphics[scale=1]{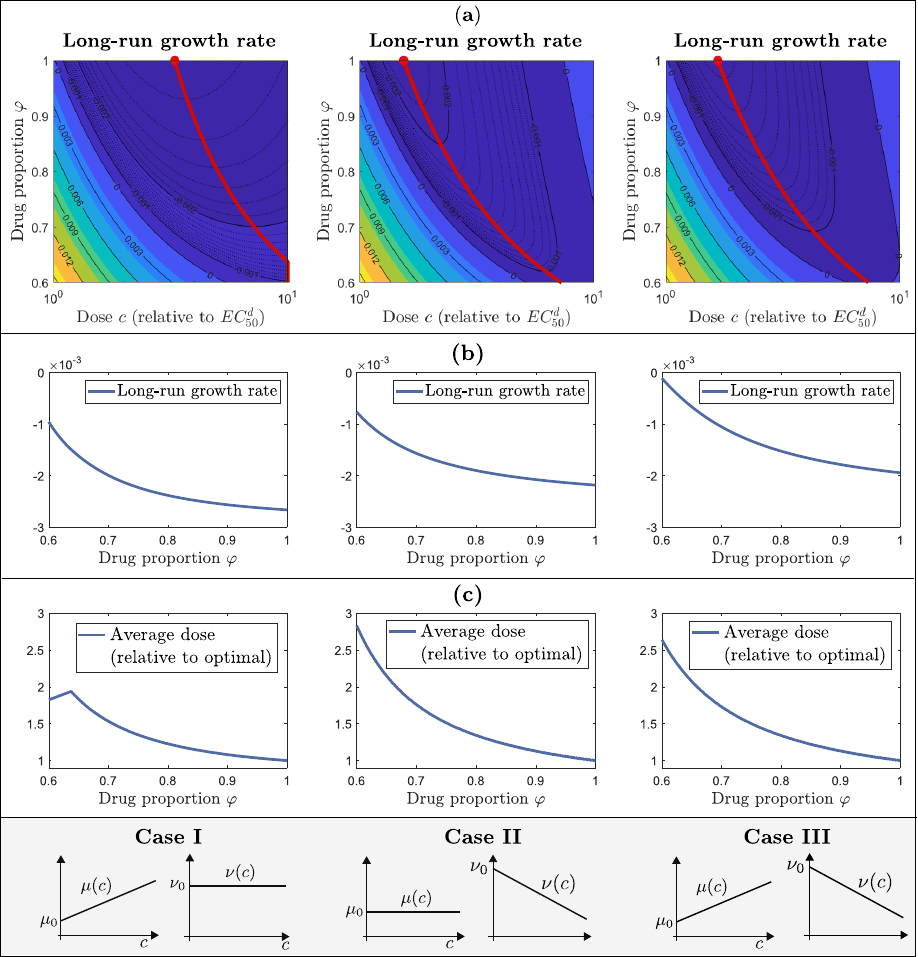}
    \caption{
    {\bf Continuous and pulsed schedules under linear induction of tolerance.}
    {\bf (a)} Long-run average growth rate of the tumor
    under a range of possible pulsed schedules of the form $(c,\varphi)$, 
    where $c$ is the drug dose applied and 
    $\varphi$ is the proportion of drug exposure.
    The red curves show for each exposure proportion $\varphi$
    the optimal dose $c^\ast(\varphi)$ to give.
    The red dots show the optimal pulsed schedules, which are continuous for Cases I, II and III, consistent with our results in 
    Figure \ref{fig:optimallinear}.
    {\bf (b)}
    Long-run average growth rate
    under the optimal dose $c^\ast(\varphi)$, shown as a function of the proportion $\varphi$ of drug exposure.
    {\bf (c)} Average dose applied during each treatment cycle under the optimal dose $c^\ast(\varphi)$, shown as a function of the proportion $\varphi$ of drug exposure, normalized to the average dose under the optimal schedule.
    }
    \label{fig:pulsedlinear}
\end{figure}

\subsection{Long treatment breaks significantly deterioriate treatment efficacy}
\label{sec:pulsedlinear}

Under linear induction of tolerance, it is optimal in the long run to expose the tumor continuously to the drug at a majority drug-tolerant composition.
We are next interested in investigating how the optimal treatment compares with pulsed treatments,
where the tumor is taken periodically off drug to allow tolerant cells to revert back to sensitivity.
This comparison is made using the long-run average growth rate of the tumor 
(Section \ref{sec:optcontrolproblem}).

We consider a pulsed schedule as a triple $(c,t_{\rm cycle},\varphi)$ encoding the drug dose $c$, the duration of each treatment cycle $t_{\rm cycle}$, and the proportion of time $\varphi$ the drug is applied during each cycle.
A constant schedule is considered a pulsed schedule with $\varphi=1$.
It is relatively simple to show that in our model, for any pulsed schedule $(c,t_{\rm cycle},\varphi_1)$, there exists a schedule $(c,t_{\rm cycle}/2,\varphi_2)$ with half the cycle length which yields the same or lower 
long-run growth rate (Section \ref{app:pulsed}).
Therefore, when investigating pulsed schedules, it is sufficient to consider idealized schedules with an arbitrarily short cycle length.
However, it is also simple to show that once the cycle length has become sufficiently small, the long-run average growth rate becomes effectively invariant to the cycle length (Section \ref{app:pulsed}).
Thus, even if the best pulsed schedule involves switching arbitrarily fast between drug and no drug, a comparable 
outcome can be obtained using a much longer cycle time.
This is an important insight, as it means that the model-recommended pulsed schedule is more clinically feasible.

For a given pulsed schedule $(c,\varphi)$ with an arbitrarily short cycle length, the long-run average growth rate of the tumor can be computed using the relatively simple formula \eqref{eq:sigmapulsed}.
In Figure \ref{fig:pulsedlinear}(a), we show the long-run growth rate for a wide range of possible pulsed schedules $(c,\varphi)$ under Cases I, II and III.
The red lines show for each proportion $\varphi$ of drug exposure the optimal dose $c^\ast(\varphi)$ to administer, and the red dots indicate the optimal schedules.
As expected, a constant low dose is optimal in the long run.
Figure \ref{fig:pulsedlinear}(b)--(c) shows that 
for all three cases, the optimal schedule is quite sensitive to the duration of drug exposure.
Instituting treatment breaks of 20--30\% or longer results in a significantly larger tumor growth rate and a larger average dose being required to achieve that growth rate.
From a clinical perspective, understanding to what extent 
the optimal strategy can be perturbed to insert treatment breaks
can be useful for 
reducing potential side effects of continuous drug exposure.
Plots like those in Figure \ref{fig:pulsedlinear} could help a clinician understand the implications of instituting treatment breaks of a certain length, and to determine which dose is best for that length.

\subsection{Pulsed maximum dosing is optimal in the long run under uniform induction of tolerance}
\label{sec:uniforminduction}

We now consider the case where drug tolerance is induced in a uniform fashion, 
meaning that the drug effect is constant whenever the drug is present
(Section \ref{sec:drugeffectswitching}).
In this case, the forward-backward sweep method does not converge to a unique solution, due to the discontinuity of the transition rate functions $\mu(c)$ and $\nu(c)$.
We therefore rely on mathematical analysis.

First, 
we can show that in the long run, it is optimal to apply a pulsed schedule alternating arbitrarily quickly between the maximum dose $c_{\rm max}$ and no dose (Section \ref{app:proofs}).
To verify this result computationally, we compare in Figure \ref{fig:optimalheaviside}(b) the long-run average tumor growth rate over a wide range of possible pulsed schedules $(c,\varphi)$.
For each given proportion of drug exposure $\varphi$, it is optimal to apply the maximum dose $c_{\rm max}$ whenever the drug is present, as indicated by the red lines.
For Case I, it is optimal to apply the maximum dose continuously, and for Cases II and III, it is optimal to alternate between the maximum dose and no dose, as indicated by the red dots.
Thus, whereas low continuous dosing is optimal in the long run for linear induction of tolerance, for uniform induction, it is optimal to apply the maximum dose either continuously or intermittently.
This result is intuitive since under uniform induction, once any drug dose is applied, there is no downside in applying a larger dose.

\begin{figure*}
    \includegraphics[scale=1]{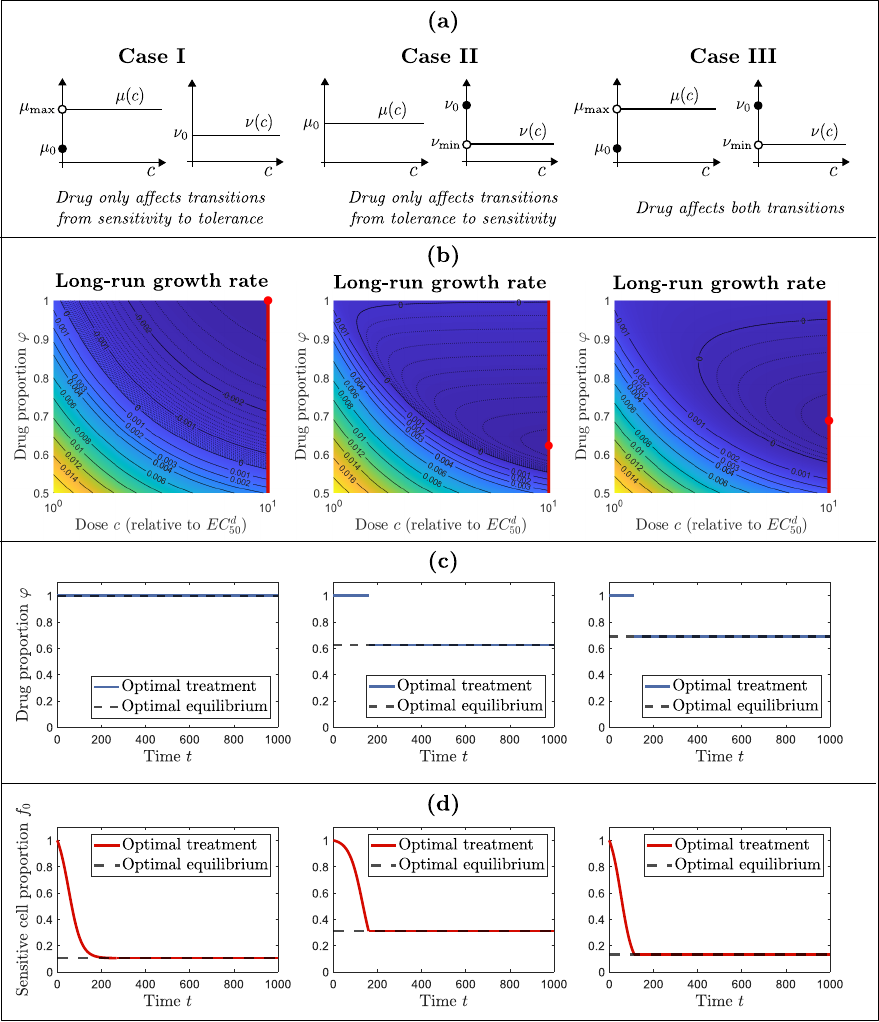}
\end{figure*}

\begin{figure}
    \centering
    \caption{{\bf Optimal dosing under uniform induction of tolerance.}
        {\bf (a)}
In our investigation of uniform induction of tolerance, we consider three cases.
In Case I, the drug only increases the transition rate from sensitivity to tolerance ($\mu(0)=\mu_0$ and $\mu(c) = \mu_{\rm max}$ for $c>0$, $\nu(c) = \nu_0$ for $c \geq 0$), 
in Case II, the drug only decreases the transition rate from tolerance to sensitivity 
($\mu(c) = \mu_0$ for $c \geq 0$, $\nu(0)=\nu_0$ and $\nu(c) = \nu_{\rm min}$ for $c>0$), 
and in Case III, the drug does both
($\mu(0)=\mu_0$ and $\mu(c) = \mu_{\rm max}$ for $c>0$, $\nu(0)=\nu_0$ and $\nu(c) = \nu_{\rm min}$ for $c>0$).
     {\bf (b)} 
    The heatmaps show the long-run average tumor growth rate 
    for a range of possible pulsed schedules of the form $(c,\varphi)$
    for Cases I, II and III.
    The red curves show each for exposure proportion $\varphi$
    the optimal dose $c^\ast(\varphi)$ to give,
    and the red dots show the optimal pulsed schedules.
    For each possible exposure proportion $\varphi$, it is
    optimal to apply the maximum dose $c_{\rm max}$.
    For Case I, where the drug only affects transitions from sensitivity to tolerance, it is optimal to apply the maximum dose continuously.
    For Cases II and III, 
    it is optimal to alternate between the maximum dose and no dose.
    {\bf (c)} Under uniform induction of tolerance, 
    we can prove that during the transient phase, 
    it is optimal to apply the maximum dose throughout,
    and during the equilibrium phase,
    it is optimal to apply a pulsed schedule $(c_{\rm max},\varphi)$ alternating between the maximum allowed dose $c_{\rm max}$ and no dose (possibly with $\varphi =1$).
    Examples of optimal strategies are shown for Cases I, II and III.
    For Case I, where the drug only increases transitions from sensitivity to tolerance, it is optimal to apply the maximum dose continuously throughout the entire treatment period.
    {\bf (d)} Proportion of sensitive cells $f_0(t)$ over time under the optimal dosing strategy. Same as for linear induction of tolerance, the optimal dosing strategy maintains the tumor at a fixed proportion between sensitive and tolerant cells in the long run.
    }
    \label{fig:optimalheaviside}
\end{figure}

Second, by analyzing the transient phase using the same approach as for linear induction of tolerance, 
we can show that the optimal transient strategy is to apply the maximum dose $c_{\rm max}$ continuously under Cases I, II and III (Section \ref{app:transient}).
This enables us to derive
optimal dosing strategies for all cases under uniform induction of tolerance (Figure \ref{fig:optimalheaviside}(c)).
For Case I, it is optimal to apply the maximum dose $c_{\rm max}$ continuously from the beginning.
This implies that drug-induced tolerance does not necessarily call for 
low or intermittent dosing.
For Cases II and III, it is optimal to start with the maximum dose and then alternate between the maximum dose and no dose.
For all cases, the optimal strategy eventually maintains the tumor at a fixed composition between sensitive and tolerant cells (Figure \ref{fig:optimalheaviside}(d)).
Thus, same as for the case of linear induction, the optimal strategy involves a transient and an equilibrium phase, where the goal of the transient phase is to steer the tumor to a desired composition, and the goal of the equilibrium phase is to maintain it at that composition.

In this section,
we have referred to a strategy alternating arbitrarily quickly between the maximum dose and no dose as the optimal long-run strategy.
Since this is an idealized schedule, it is more precise to say that 
there exists a sequence of pulsed strategies which achieve the optimal long-run growth rate
as $t_{\rm cycle}$ approaches zero.
However, same as for the case of linear induction, the cycle length can be extended significantly without affecting the long-run average growth rate of the tumor (Section \ref{app:pulsed}).    

\subsection{Integration with {\em in vitro} patient tumor data produces individualized clinical insights}
\label{sec:russo}

Linear and uniform induction of tolerance represent two extreme cases where the level of tolerance induction either increases gradually with the dose or
is insensitive to the dose.
In the former case, lower doses are generally preferred, whereas in the latter case, the maximum dose should be applied whenever the drug is present.
If a clinician understands which case applies better to a particular cancer type,
for example based on experience with prior patients,
these general insights can help guide clinical decision-making.
If the biological mechanism of drug-induced tolerance is furthermore known, it becomes possible to distinguish between Cases I, II and III.
For linear induction, this can help decide whether to apply low or high doses during the initial stages of treatment, and for uniform induction, it can help decide whether 
to apply the maximum dose continuously or intermittently in the long run.

Our approach 
can yield even more specific clinical insights
when integrated with experiments
involving {\em in vitro} models of patient tumors.
Technologies to derive clinically relevant cell culture models from patient tumor samples
are advancing rapidly \cite{wensink2021patient,jiang2020automated, tebon2023drug, walsh2014quantitative,gunnarsson2024understanding}, 
and their development has recently been put into focus by the FDA Modernization Act 2.0,
which seeks to replace animal models in drug safety and effectiveness testing by 
{\em in vitro} human cell models and {\em in silico} computational modeling \cite{fda_modernization_act_2_0,zushin2023fda,ahmed2023fda}.
As a case study,
we apply our insights to {\em in vitro} experiments from Russo et al.~\cite{russo2022modified}, where the authors treated two colorectal cancer cell lines with increasing doses of anti-cancer agents.
They found that when BRAF V600-E mutated WiDr colorectal cancer cells were treated with a combination of dabrafenib and cetuximab, the drugs linearly induced transitions from a drug-sensitive to a drug-persistence state (Case I under linear induction).
For RAS/RAF wild-type DiFi cells treated with cetuximab, the drug uniformly induced the same transitions (Case I under uniform induction).
We assume that the drug does not affect transitions from persistence to sensitivity, since no such effect is indicated in \cite{russo2022modified}.

\begin{figure}
    \centering
    \includegraphics[scale=1]{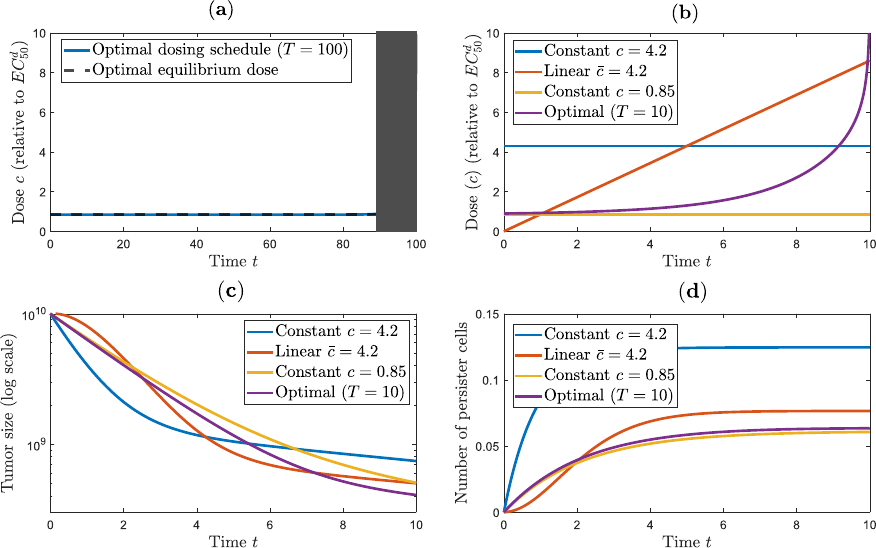}
    \caption{{\bf Optimal dosing for an {\em in vitro} colorectal cancer system.}
    {\bf (a)} 
    For BRAF V600-E mutated WiDr cells, Russo et al.~\cite{russo2022modified} observed linearly increasing transitions from sensitivity to tolerance
    under
    dabrafenib + cetuximab treatment.
    Under a sufficiently long time horizon, the optimal dosing strategy applies a low constant dose equal to 85\% of the ${\rm EC}_{50}^d$ dose.
    {\bf (b)} In \cite{russo2022modified}, the authors consider both a constant strategy and a linearly increasing strategy with average dose $\bar{c} = 4.2$ (relative to the ${\rm EC}_{50}^d$ dose) using a time horizon of $T=10$ days.
    The optimal strategy over $T=10$ days applies small doses during most of the period and increases the dose towards the end of the horizon.
    {\bf (c)} The linearly increasing schedule   
    with average dose $\bar{c} = 4.2$ outperforms the constant-dose schedule with the same average dose in terms of tumor reduction over $T=10$ days, which is consistent with the results of \cite{russo2022modified}.
    However, 
    the optimal strategy over $T=10$ days achieves a $19\%$ greater tumor reduction while applying half as much dose.
    Moreover, the optimal long-run constant dose $c^\ast=0.85$ achieves similar tumor reduction to the linear strategy at less than 20\% of the cumulative dose.
    {\bf (d)} Out of the four schedules considered in part (b), the optimal long-run constant dose  $c^\ast=0.85$ induces the fewest sensitive cells to adopt persistence over $T=10$ days.
    }
    \label{fig:russo}
\end{figure}

According to our analysis in  Section \ref{sec:uniforminduction}, 
we already know that for the DiFi cells, it is optimal to apply the maximum dose continuously.
We therefore focus 
on
the WiDr cells
and
compare the optimal strategy with dosing regimens considered in \cite{russo2022modified}. 
For the purposes of this analysis, we adopt the model of drug effect on cell proliferation from \cite{russo2022modified}
(Section \ref{app:russo}). 
As expected from our analysis in Section \ref{sec:optlinearinduction}, the optimal strategy over a treatment horizon of $T=100$ days applies a constant dose continuously, once again ignoring the boundary effect at the very end
(Figure \ref{fig:russo}(a)).
The optimal dose is even smaller here than for the examples considered in Section \ref{sec:optlinearinduction} or $c^\ast = 0.85$ as a proportion of the ${\rm EC}_{50}^d$ dose 
(in absolute terms, the ${\rm EC}_{50}^d$ dose is $2.3779 \cdot 10^{-7}$ $M$ and the optimal dose is $2 \cdot 10^{-7}$ $M$).
One reason that the optimal dose is lower than the ${\rm EC}_{50}^d$ dose in this case
is that
the dose at which the net sensitive cell division rate $\lambda_0(c)$ becomes negative is much smaller than the ${\rm EC}_{50}^d$ dose, meaning that the tumor reduces in size even at $c^\ast = 0.85$ (Section \ref{app:russo}).

In \cite{russo2022modified}, the authors considered a shorter time horizon of $T=10$ days, and they compared a constant dose schedule with $c=4.2$, relative to the ${\rm EC}_{50}^d$ dose,
with a linearly increasing dose schedule involving the same average dose $\bar{c} = 4.2$ over the horizon (Figure \ref{fig:russo}(b)).
They observed that the linearly increasing schedule significantly outperformed the constant-dose schedule, which is confirmed by our analysis (Figure \ref{fig:russo}(d)).
However, we note that the considerable preference for the linear schedule is due to the
constant-dose schedule with $c=4.2$ being significantly suboptimal.
For example, 
the long-run optimal dose
$c^\ast = 0.85$ leads to a similar tumor decay over $T=10$ days as the linear schedule with $\bar{c} = 4.2$ (Figure \ref{fig:russo}(c)).
The optimal short-term strategy over $T=10$ days
applies small doses for most of the period, starting at $c = 0.90$, 
before increasing
the dose at the very end (Figure \ref{fig:russo}(b)).
The optimal short-term strategy achieves a 19\% greater tumor reduction than the linear strategy, while applying less than half as much cumulative dose, or $\bar{c} = 2.0$ on average.

Here, we have evaluated the treatment schedules in terms of the tumor size at the end of the treatment horizon.
In \cite{russo2022modified}, the authors considered a different metric of treatment success, 
the number of persister cells induced under treatment, assuming no persister cells at the beginning of treatment.
Even under this metric, the optimal short-term strategy outperforms the linearly increasing schedule, 
and in fact,
the optimal long-run dose $c^\ast = 0.85$ performs the best (Figure \ref{fig:russo}(d)).
From one perspective, this is not surprising, since if the goal is to minimize the number of sensitive
cells that the drug induces to adopt persistence,
it is optimal to give no drug and thus induce 
no sensitive cells to adopt persistence.
On the other hand, 
the fact that the optimal long-run dose induces fewer persister cells than the other three schedules in the short term is precisely why it is better in the long run.

In summary, the constant-dose strategy with $c=0.85$ is both successful at restricting persister evolution in the short term and optimally controlling the tumor in the long term.
In addition, it applies around an 80\% lower cumulative dose than the linear schedule with average dose $\bar{c} = 4.2$.
These results show how applying our approach to experiments involving {\em in vitro} models of patient tumors can yield individualized clinical insights.
Determining the optimal dose, understanding how it affects the tumor in the short and long term, and comparing it with other potential dosing strategies would be difficult without the mathematical model.

\section{Discussion}

In this work, we have investigated optimal dosing of anti-cancer treatment under drug-induced plasticity.
We found that 
the optimal strategy always involves both a transient phase and an equilibrium phase,
where during the equilibrium phase, the tumor is maintained at a fixed 
intratumor composition that minimizes its long-run growth rate.
Under linear induction of tolerance,
the optimal equilibrium strategy is to apply a low constant dose, 
while under uniform induction of tolerance,
it is optimal to 
apply 
the maximum dose either continuously or intermittently.
We proved this mathematically and provided simple methods for computing the optimal strategies.
We also showed that during the transient phase, the optimal strategy steers the tumor to the desired long-run composition while precisely balancing the trade-off between cell kill and tolerance induction.
We discussed how these general insights can help guide clinical decision-making, and we showed how more specific insights can be obtained by integrating our methodology with experiments involving {\em in vitro} models of patient tumors.

We view this work as an initial step towards a more 
complete understanding of optimal dosing under drug-induced plasticity.
We have left many important questions unaddressed which serve as a guide for future work.
First, we have only considered the extreme cases where drug tolerance is induced in a linear or uniform fashion.
If the transition rate functions $\mu(c)$ and $\nu(c)$ follow more general Michaelis-Menten or Hill dynamics \cite{goutelle2008hill}, it may be difficult to apply optimal control analysis, but the optimal pulsed strategy can always be determined using the explicit expression \eqref{eq:sigmapulsed}.
Second, while we have assumed that cells transition directly between a drug-sensitive and a drug-tolerant state, these transitions may occur through intermediate states or in a more continuous fashion.
In addition, there is evidence that prolonged drug exposure can induce epigenetic reprogramming of tolerant cells and eventually lead to stable drug resistance \cite{sharma2010chromatin,shaffer2017rare}.
Tolerant cells can furthermore form a reservoir for
the evolution of eventual genetic resistance \cite{ramirez2016diverse,Engelman2016,russo2022modified}.
This suggests the need to include more cell states in the model, which is easily accommodated within our framework and we plan to do in future work.
Alternatively, the cell state can be treated as a continuous variable, representing gradual evolution towards resistance
\cite{chisholm2015emergence,chisholm2016cell,pouchol2018asymptotic,almeida2019evolution,jarrett2020optimal}.
Nevertheless, we believe that our analysis of the dynamics between sensitive and tolerant cells yields valuable insights that will be useful for understanding the more complex dynamics.
Third, we have considered an exponential growth model without competitive or cooperative dynamics.
If tumor growth is constrained by a carrying capacity,
our insights will continue to hold whenever it is possible to kill the tumor, 
since then the carrying capacity will not significantly affect the dynamics.
However, if recurrence cannot be avoided, it may become optimal to keep the tumor close to the carrying capacity, in order to avoid releasing tolerant cells from competition with sensitive cells \cite{gatenby2009change,greene2019mathematical,greene2020mathematical}.
This will be important to explore in future work.
Finally, we have equated drug dose with drug concentration throughout and ignored the effect of pharmacokinetics as well as unwanted toxic side effects of the drug.
Confirming the clinical applicability of our approach would require an investigation of these dynamics, particularly for the case of uniform tolerance induction, where the dynamics are significantly different depending on whether the drug is present or not.

Taking full advantage of our approach 
for a specific cancer type
or a specific patient
requires understanding
the quantitative dynamics 
of drug induction 
for that specific case.
Russo et al.~\cite{russo2022modified} have suggested a Bayesian approach for inferring the transition rate function $\mu(c)$ from sensitivity to tolerance using {\em in vitro} tumor bulk data, but their work only distinguishes between the extreme cases of linear and uniform induction of tolerance.
In a recent work, we have suggested a maximum likelihood framework for inferring more general dose-response dynamics for $\mu(c)$ based on the Hill equation \cite{wu2024inferring}.
However, 
as we have discussed,
it is also plausible that the drug inhibits the transition rate $\nu(c)$ from tolerance back to sensitivity, or that the drug simultaneously affects both rates.
In general, the present work shows that the optimal strategy varies significantly depending on the exact dynamics of tolerance induction.
This suggests the need 
to develop
integrated experimental and mathematical tools capable of jointly inferring $d_0(c)$, $\mu(c)$ and $\nu(c)$ from experimental data,
which may require data on the phenotypic composition of the tumor \cite{gunnarsson2023statistical}.
Ultimately, 
these integrated tools can help usher in a new era of precision medicine, 
where the dynamics of drug induction are determined
and dosing strategies are optimized
on an individual patient basis \cite{mcguire2013formalizing,hamis2019blackboard}.

\section{Models and methods}

\subsection{Model of cell proliferation and phenotypic switching} \label{sec:model}

Our baseline model is a multi-type branching process model in continuous time \cite{athreya2004branching}.
In the model, cells switch stochastically between two distinct cell states, a drug-sensitive (type-0) and a drug-tolerant (type-1) state.
In the absence of drug, a type-0 cell divides into two type-0 cells at rate $b_0$, it dies at rate $d_0$ and it transitions to the tolerant type-1 state at rate $\mu>0$.
More precisely, each type-0 cell waits an exponentially distributed amount of time with rate $a_0 := b_0+d_0+\mu$ 
before it either divides with probability $b_0/a_0$, dies with probability $d_0/a_0$, or transitions to type-1 with probability $\mu/a_0$.
A type-1 cell divides at rate $b_1$, it dies at rate $d_1$, and it reverts to type-0 at rate $\nu>0$ (Figure \ref{fig:model_overviews}(a)).
The net birth rates of the two types are denoted by $\lambda_0 := b_1-d_1$ and $\lambda_1:=b_1-d_1$.

\subsection{Drug effect on cell proliferation} \label{sec:drugeffectprolif}

To model the effect of the anti-cancer drug on cell proliferation, we assume that the sensitive cell death rate $d_0$ is an increasing function of the current drug dose $c$.
For simplicity, we assume that the drug-tolerant cells are unaffected by the drug.
The specific functional form for $d_0$ is the Michaelis-Menten function, which is commonly used for this purpose:
\begin{align} \label{eq:d0fcn}
    &d_0(c) = d_0+(d_{\rm max}-d_0)\cdot\frac{c}{c+{\rm EC}_{50}^d}
    = d_0+\Delta d_0 \cdot\frac{c}{c+{\rm EC}_{50}^d}.
\end{align}
Here, $d_0$ is the sensitive cell death rate in the absence of drug, $d_{\rm max}$ is the saturation death rate under an arbitrarily large drug dose, $\Delta d_0 := d_{\rm max}-d_0 \geq 0$ is their difference, and ${\rm EC}_{50}^d$ is the dose at which the drug has half the maximal effect on sensitive cell proliferation.
Note that we write simply $d_0$ instead of $d_0(0)$ to signify the death rate in the absence of drug.
The function in \eqref{eq:d0fcn} is concave, but it is worth noting that if the 
the drug dose is viewed on a log scale, then \eqref{eq:d0fcn} becomes an $S$-shaped dose response curve which is characteristic of viability curves in the biological literature (Figure \ref{fig:logdose}).

Expression \eqref{eq:d0fcn} can be rewritten as
\begin{align*}
    &d_0(c) = d_0+\Delta d_0\cdot\frac{(c/{\rm EC}_{50}^d)}{(c/{\rm EC}_{50}^d)+1}.
\end{align*}
Thus, if we measure the drug dose 
as a proportion of 
the ${\rm EC}_{50}^d$ dose, we obtain the simpler expression
\begin{align} \label{eq:d0fcnsimpler}
    &d_0(c) = d_0+\Delta d_0\cdot\frac{c}{c+1}.
\end{align}
We will usually adopt this scaling of the drug dose and use the simpler form for $d_0(c)$.
Similarly, the net birth rate of sensitive cells is given by
\begin{align} \label{eq:lambda0}
    \lambda_0(c) = b_0-d_0(c) = \lambda_0 - \Delta d_0 \cdot\frac{c}{c+1},
\end{align}
where we simply write $\lambda_0$ instead of $\lambda_0(0)$.

While we have assumed in the preceding discussion that the drug increases the death rate of sensitive cells (cytotoxic drug), we can equivalently assume that the drug decreases their division rate (cytostatic drug).
This is because our results depend on the division rate and death rate of sensitive cells only through the net birth rate $\lambda_0(c)$.
If we assume that a cytostatic drug influences the division rate as follows:
\begin{align*}
    b_0(c) = b_0 - \Delta b_0 \cdot \frac{c}{c+1},
\end{align*}
the net division rate becomes
\begin{align*}
    \lambda_0(c) = b_0(c) - d_0 = \lambda_0 - \Delta b_0 \cdot \frac{c}{c+1},
\end{align*}
which has the same form as expression \eqref{eq:lambda0}.

\begin{figure}
    \centering
    \includegraphics[scale=0.9]{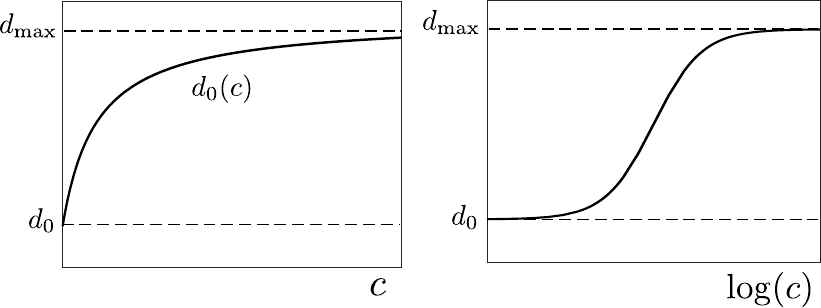}
    \caption{
    {\bf Drug effect on sensitive cell death rate $d_0$.}
    The Michaelis-Menten function \eqref{eq:d0fcn} describing the drug effect on cell proliferation is concave in the drug dose $c$ (left panel).
    When the dose $c$ is viewed on a log scale, the function becomes takes on an $S$-shape which is characteristic of viability curves in the biological literature (right panel).
    }
    \label{fig:logdose}
\end{figure}

\subsection{Drug effect on switching rates} \label{sec:drugeffectswitching}

To model the possibility of drug-induced tolerance, we assume that in the presence of the anti-cancer drug, one or both of the transition rates $\mu$ and $\nu$ change with the drug dose $c$.
If we focus first on the transition rate $\mu$ from sensitivity to tolerance, it is natural to assume that it follows a Michaelis-Menten function of the form \eqref{eq:d0fcn}:
\begin{align} \label{eq:mufcn}
    \mu(c) &= \mu_0+(\mu_{\rm max}-\mu_0) \cdot \frac{c}{c+{\rm EC}_{50}^{\mu}} 
    = \mu_0+\Delta \mu \cdot \frac{c}{c+{\rm EC}_{50}^{\mu}},
\end{align}
where $\Delta \mu := \mu_{\rm max}-\mu_0 \geq 0$.
The dose at which the drug has half the maximal effect on $\mu$, ${\rm EC}_{50}^\mu$, is in general distinct from ${\rm EC}_{50}^d$.
Here, we will focus on the two extreme cases where ${\rm EC}_{50}^{\mu} \gg {\rm EC}_{50}^{d}$ and ${\rm EC}_{50}^{\mu} \ll {\rm EC}_{50}^{d}$.
The case ${\rm EC}_{50}^{\mu} \gg {\rm EC}_{50}^{d}$ means that the drug induces tolerance at doses which are already very effective at killing the sensitive cells.
In this case, if the drug dose $c$ is similar in magnitude to ${\rm EC}_{50}^{d}$,
then $c \ll {\rm EC}_{50}^\mu$ and by Taylor expansion:
\begin{align*}
    \mu(c) =  \mu_0+\Delta \mu \cdot \frac{{c}/{\rm EC}_{50}^{\mu}}{{c}/{{\rm EC}_{50}^{\mu}}+1} \approx \mu_0 + \frac{\Delta \mu}{{\rm EC}_{50}^\mu} \cdot c.
\end{align*}
In other words, $\mu(c)$ can be treated as linearly increasing in $c$.
The case ${\rm EC}_{50}^{\mu} \ll {\rm EC}_{50}^{d}$ means that the drug has fully induced tolerance at low doses which are not yet effective at killing the sensitive cells.
In this case, if the drug dose $c$ is similar in magnitude to ${\rm EC}_{50}^{d}$, then $c \gg {\rm EC}_{50}^\mu$ and by Taylor expansion 
for $c>0$:
\begin{align*}
    \mu(c) =\mu_0+\Delta \mu \cdot \frac{1}{1+{\rm EC}_{50}^{\mu}/c} \approx \mu_0 + \Delta \mu = \mu_{\rm max}.
\end{align*}
In other words, $\mu$ is effectively constant in the presence of drug.

Based on the preceding, we assume the following two functional forms for $\mu$:
\begin{enumerate}[(1)]
    \item $\mu(c) = \mu_0 + k c$, where $k \geq 0$ (tolerance is linearly induced).
    \item $\mu(0) = \mu_0$ and $\mu(c) = \mu_{\rm max}$ for $c>0$ (tolerance is uniformly induced).
\end{enumerate}
The reason we focus on these two extremes is twofold. 
First, these two exact forms were observed empirically in the recent investigation by Russo et al.~\cite{russo2022modified}, which is the only work we know of which has attempted to quantify drug-induced tolerance in a dose-dependent manner using {\em in vitro} data.
Second, due to their simplicity relative to the general case \eqref{eq:mufcn}, we are able to establish rigorous mathematical results on optimal dosing strategies.

For the transition rate $\nu$ from tolerance back to sensitivity, the corresponding Michaelis-Menten function has the following form:
\begin{align*} 
    \nu(c) &= \nu_0+(\nu_{\rm min}-\nu_0) \cdot \frac{c}{c+{\rm EC}_{50}^{\nu}} 
    = \nu_0 - \Delta \nu \cdot \frac{c}{c+{\rm EC}_{50}^{\nu}},
\end{align*}
where $\Delta \nu:= \nu_0 - \nu_{\rm min} \geq 0$.
We will also assume the following two functional forms for
$\nu$:
\begin{enumerate}[(1)]
    \item $\nu(c) = \nu_0 - m c$, where $m \geq 0$.
    \item $\nu(0) = \nu_0$ and $\nu(c) = \nu_{\rm min}$ for $c>0$.
\end{enumerate}
By assumption, $\nu$ is a decreasing function of $c$, since we are modeling the case where the drug inhibits transitions from drug-tolerance to drug-sensitivity.

Throughout, we assume that $\mu_0>0$ and $\nu_0>0$, meaning that cells are able to transition between the two states even in the absence of drug.

\subsection{System equations} \label{sec:systemeqns}

Let $Z_0(t)$ and $Z_1(t)$ denote the number of sensitive and tolerant cells, respectively, at time $t$.
Let $n_0(t) := \E[Z_0(t)]$ and $n_1(t) := \E[Z_1(t)]$ be the mean number of sensitive and tolerant cells at time $t$.
These mean functions satisfy the differential equations
\begin{align} \label{eq:systemeqn}
\begin{split}
        & \frac{dn_0}{dt} = (\lambda_0(c)-\mu(c))n_0+\nu(c)n_1, \\
    & \frac{dn_1}{dt} = (\lambda_1-\nu(c))n_1+\mu(c)n_0.
\end{split}
\end{align}
Here, the dose $c$ should be viewed as a function of time, $c(t)$.
The total population size $n(t) := n_0(t)+n_1(t)$ obeys the differential equation
\begin{align} \label{eq:dndt0}
    \frac{dn}{dt} = \lambda_0(c)n_0+\lambda_1 n_1.
\end{align}
Let $f_0(t) := n_0(t)/n(t)$ and $f_1(t) := 1-f_0(t)$ be the proportions of each cell type in the population at time $t$.
Then
\begin{align} \label{eq:df0dt}
    \frac{df_0}{dt} &= \frac{({dn_0}/{dt}) \cdot n - n_0 \cdot ({dn}/{dt})}{n^2} \nonumber \\
    &= \frac{\big((\lambda_0(c)-\mu(c))n_0+\nu(c)n_1\big) n - n_0 \big(\lambda_0(c)n_0+\lambda_1n_1\big)}{n^2}  \nonumber\\
    &= (\lambda_0(c)-\mu(c))f_0+\nu(c)(1-f_0) - f_0 \big(\lambda_0(c)f_0+\lambda_1(1-f_0)\big) \nonumber \\
    &= \big(\lambda_1-\lambda_0(c)\big)f_0^2 - \big(\lambda_1-\lambda_0(c)+\mu(c)+\nu(c)\big)f_0+\nu(c). 
\end{align}
Expression \eqref{eq:dndt0} can be rewritten as follows:
\begin{align} \label{eq:dndt}
        \frac{dn}{dt} &= \big(\lambda_0(c)f_0+\lambda_1(1-f_0)\big) n = \big((\lambda_0(c)-\lambda_1)f_0+\lambda_1\big)n.
\end{align}
Solving this differential equation,
the mean tumor size at time $T$ is given by the explicit expression
\begin{align} \label{eq:n(t)formula}
    n(T) = n(0) \exp\left(\int_0^T \big((\lambda_0(c)-\lambda_1)f_0+\lambda_1\big)dt\right).
\end{align}
Expressions \eqref{eq:dndt} and \eqref{eq:n(t)formula} indicate that the instantaneous growth rate of the population at time $t$ is given by
\begin{align} \label{eq:growthratedef}
    u(c,f_0) := (\lambda_0(c)-\lambda_1)f_0+\lambda_1,
\end{align}
which will be an important quantity in our investigation of optimal treatment strategies.

\subsection{Behavior under a constant dose} \label{sec:constantdoseequilibrium}

When a constant drug dose is applied, $c(t) = c$, all model parameters $\lambda_0(c)$, $\lambda_1$, $\mu(c)$ and $\nu(c)$ are constant.
In this case, the system equations \eqref{eq:systemeqn} 
admit explicit solutions.
To simplify the notation, we let $\lambda_0$, $\lambda_1$, $\mu$ and $\nu$ denote the constant parameters.
We first note that the dynamics of the system can be encoded in the so-called {\em infinitesimal generator}
\begin{align*}
    {\bf A} := \begin{bmatrix}
        \lambda_0-\mu & \mu \\ \nu & \lambda_1-\nu,
    \end{bmatrix},
\end{align*}
where the $(i+1,j+1)$-th element describes the net rate at which a cell of type-$i$ produces a cell of type-$j$.
The infinitesimal generator has distinct real eigenvalues $\rho < \sigma$ given by
\begin{align*}
    & \sigma = \frac12 \left((\lambda_0-\mu) + (\lambda_1-\nu) + \sqrt{((\lambda_0-\mu)-(\lambda_1-\nu))^2+4\mu\nu}\right), \\
    & \rho = \frac12 \left((\lambda_0-\mu) + (\lambda_1-\nu) - \sqrt{((\lambda_0-\mu)-(\lambda_1-\nu))^2+4\mu\nu}\right).
\end{align*}
Now define
\begin{align*}
    & \delta := \frac{(\lambda_0-\mu)-\rho}{\nu}, \\
    & \beta := \frac{\sigma-(\lambda_0-\mu)}{\nu}.
\end{align*}
If $n_0(0)=n$ and $n_1(0)=m$, the mean number of cells of each type can be written explicitly as
\begin{align} \label{eq:n0n1expl}
\begin{split}
        & n_0(t) = \frac{n\delta+m}{\delta+\beta} e^{\sigma t} + \frac{n\beta-m}{\delta+\beta} e^{\rho t}, \\
    & n_1(t) = \frac{\beta(n\delta+m)}{\delta+\beta} e^{\sigma t} - \frac{\delta(n\beta-m)}{\delta+\beta} e^{\rho t}.
\end{split}
\end{align}
These expressions
indicate
two important aspects of the long-run dynamics.
First, the tumor eventually grows (or decays) at 
exponential rate $\sigma$.
Second, the intratumor composition eventually reaches
an equilibrium where the proportion between type-1 and type-0 becomes the constant $\beta$.
We denote the equilibrium proportion of the type-0 cells as
\begin{align} \label{eq:f0bar}
    \bar{f}_0 := \frac{1}{1+\beta} = \frac{\nu}{\sigma-\lambda_0+\mu+\nu}.
\end{align} 
When we are considering a constant treatment, $c(t) = c$, we will write $\sigma(c)$ and $\bar{f}_0(c)$ to explicitly denote the dependence on the dose $c$.
For detailed derivations of the above formulas, we refer to Appendix A.2 of our previous work \cite{gunnarsson2020understanding}.

\subsection{Optimal control problem}
\label{sec:optcontrolproblem}

To determine the optimal dosing strategy, we assume a finite treatment horizon $[0,T]$ with $T>0$.
Let ${\rm PC}([0,T])$ denote the set of all piecewise continuous functions on $[0,T]$.
The space of allowable dosing strategies is
\[
{\cal C}_T := \{c \in {\rm PC}([0,T]): c(t) \leq c_{\rm max}, \, \forall t \in [0,T] \},
\]
where $c_{\rm max}$ is the maximum allowable instantaneous dose.
For simplicity, when deriving optimal dosing strategies, we assume that the drug dose corresponds perfectly to the drug concentration reaching the tumor, thereby ignoring pharmacokinetic effects.
Our main objective is to find the dosing strategy $c \in {\cal C}_T$ which minimizes the expected tumor size at the end of the treatment period.
More specifically, we aim to solve the optimal control problem
\begin{align*}
    \inf_{c \in {\cal C}_T} n(T) = \inf_{c \in {\cal C}_T} (n_0(T)+n_1(T)). 
\end{align*}
By \eqref{eq:n(t)formula} and \eqref{eq:growthratedef}, it is equivalent to solve the problem
\begin{align} \label{eq:optcontrolproblem}
    \inf_{c \in {\cal C}_T} \int_0^T u(c,f_0) dt,
\end{align}
where $u(c,f_0) := (\lambda_0(c)-\lambda_1)f_0+\lambda_1$ is the instantaneous growth rate of the population and $f_0(t)$ is governed by the differential equation 
\begin{align*} 
    \frac{df_0}{dt} &= \big(\lambda_1-\lambda_0(c)\big)f_0^2 - \big(\lambda_1-\lambda_0(c)+\mu(c)+\nu(c)\big)f_0+\nu(c). 
\end{align*}
Note that under the assumption that $c$ is piecewise continuous on $[0,T]$, $f_0$ is piecewise continuously differentiable on $[0,T]$.
Over a finite treatment horizon $[0,T]$, the average growth rate of the population is given by
\begin{align*}
    \bar{u}_T(c) = \frac1T \int_0^T u(c,f_0)dt.
\end{align*}
In our investigation of the optimal control problem \eqref{eq:optcontrolproblem}, we will be interested in knowing which treatment is the best in the long run as $T \to \infty$.
Since the integral in \eqref{eq:optcontrolproblem} can be unbounded as $T \to \infty$, we will also consider the {\em long-run average growth rate}
\begin{align} \label{eq:longrunavgrate}
    \bar{u}_\infty(c) = \limsup_{T \to \infty} \bar{u}_T(c) = \limsup_{T \to \infty} \frac1T \int_0^T u(c,f_0)dt,
\end{align}
where $c: [0,\infty) \to [0,c_{\rm max}]$ is considered a treatment with an infinite time horizon.

\subsection{Rapid pulsed strategies} \label{sec:rapidpulsed}

Pulsed schedules are simple treatment strategies which alternate
between giving a fixed drug dose and no drug.
We identify a pulsed schedule as a three-dimensional vector $(c,t_{\rm cycle},\varphi)$ where $c$ is the dose applied, $t_{\rm cycle}$ is the duration of each treatment cycle, and $\varphi$ is the proportion of drug application
during
each cycle.
We will be particularly interested  in idealized strategies where the treatment cycles are arbitrarily short.
We note that if the cycle time $t_{\rm cycle} = dt$ is infinitesimal,
then using \eqref{eq:df0dt},
the infinitesimal change in $f_0$ is
\begin{align*} 
    df_0 
    &= \varphi dt \big(\big(\lambda_1-\lambda_0(c)\big)f_0^2 - \big(\lambda_1-\lambda_0(c)+\mu(c)+\nu(c)\big)f_0+\nu(c)\big) \\
    & + (1-\varphi)dt \big(\big(\lambda_1-\lambda_0\big)f_0^2 - \big(\lambda_1-\lambda_0+\mu_0+\nu_0\big)f_0+\nu_0\big).
\end{align*}
If we define the average rates over each cycle,
\begin{align} \label{eq:avgratescycle}
\begin{split}
        & \bar\lambda_0(c,\varphi) := \varphi \lambda_0(c) + (1-\varphi)\lambda_0 = \lambda_0 + \varphi (\lambda_0(c)-\lambda_0), \\
    &  \bar\mu(c,\varphi) := \varphi \mu(c) + (1-\varphi)\mu_0 = \mu_0 + \varphi (\mu(c)-\mu_0), \\
    &  \bar\nu(c,\varphi) := \varphi \nu(c) + (1-\varphi)\nu_0 = \nu_0 + \varphi (\nu(c)-\nu_0),
\end{split}
\end{align}
then $f_0(t)$ obeys the differential equation
\begin{align} \label{eq:df0ftrapidpulse}
    \frac{df_0}{dt} &= \big(\lambda_1-\bar\lambda_0(c,\varphi)\big)f_0^2 - \big(\lambda_1-\bar\lambda_0(c,\varphi)+\bar\mu(c,\varphi)+\bar\nu(c,\varphi)\big)f_0+\bar\nu(c,\varphi).
\end{align}
Similarly, the instantaneous growth rate of the population is given by
\begin{align*}
    \bar{u}(c,\varphi,f_0) = (\bar\lambda_0(c,\varphi)-\lambda_1)f_0 + \lambda_1.
\end{align*}
Thus, under an arbitrarily fast pulsed schedule, the associated model dynamics can be approximated by a constant-dose model with parameters $\bar{\lambda}_0(c,\varphi)$, $\lambda_1$, $\bar\mu(c,\varphi)$ and $\bar\nu(c,\varphi)$.
By Section \ref{sec:constantdoseequilibrium}, in the long run, the population grows at exponential rate
\begin{align} \label{eq:sigmapulsed}
    \sigma(c,\varphi) :=  \frac12 \Big(&(\bar\lambda_0(c,\varphi)-\bar\mu(c,\varphi)) + (\lambda_1-\bar\nu(c,\varphi)) \nonumber \\
    &+ \sqrt{((\bar\lambda_0(c,\varphi)-\bar\mu(c,\varphi))-(\lambda_1-\bar\nu(c,\varphi)))^2+4\bar\mu(c,\varphi)\bar\nu(c,\varphi)}\Big).
\end{align}
In addition, the population eventually reaches an equilibrium composition, where the proportion of sensitive cells becomes
\begin{align} \label{eq:f0barpulseed}
    \bar{f}_0(c,\varphi) := \frac{\bar\nu(c,\varphi)}{\sigma(c,\varphi)-\bar\lambda_0(c,\varphi)+\bar\mu(c,\varphi)+\bar\nu(c,\varphi)}.
\end{align}

\subsection{Baseline parameters} \label{sec:parametrization}

Our computational results are shown using a parametrization of the model inspired by experimental investigations of drug tolerance in cancer and bacteria. The parameter values are shown in Table \ref{table:parameters}.
First, we assume that in the absence of drug, the proliferation rate of sensitive cells is $\lambda_0 = 0.04$, and that the maximal drug effect on the death rate is $\Delta d_0 = 0.08$.
Our motivating example is PC9 cells treated with a large dose of erlotinib, for which the proliferation rate of sensitive cells in the absence of drug has been estimated as $\lambda_0 = 0.04$ per hour and the drug effect as $\Delta d_0 = 0.08$ per hour \cite{gunnarsson2020understanding}.
We note that for a cohort of patients with metastatic melanoma treated with vemurafenib, the typical value of $\lambda_0$ was 0.01 per day and the typical drug effect was $\Delta d_0 = 0.04$ per day.
Thus, the time unit for the baseline regime can be considered to be hours in the {\em in vitro} setting and days in the clinical setting.

Second, we assume that in the absence of drug, $\lambda_1 \ll \lambda_0$, meaning that the drug-tolerant cells proliferate much slower than the drug-sensitive cells, and that they only make up a small proportion of the tumor at the start of the treatment.
This is consistent with experimental evidence
showing that drug-tolerant cells are generally slow-cycling
\cite{sharma2010chromatin,roesch2010temporarily,pisco2013non,roesch2013overcoming,Engelman2016,shaffer2017rare}.
Third, we assume that $\mu_0 \ll \lambda_0$, meaning that transitions from sensitivity to tolerance are rare compared to cell divisions.
This is consistent with evidence that epigenetic modifications are commonly retained for $10 - 10^5$ cell divisions \cite{niepel2009non, sigal2006variability, cohen2008dynamic}.
Fourth, we assume that $\nu_0$ is significantly larger than $\mu_0$, meaning that the tolerant state is lost faster than it is adopted.
This is consistent with experimental evidence both from cancer \cite{gupta2011stochastic,paryad2021optimal} and bacteria \cite{norman2015stochastic}.

In Figures \ref{fig:optimallinear} and \ref{fig:pulsedlinear}, we assume that $k=m=0.0004$ and that the maximum allowable dose is $c_{\rm max}=10$ as a proportion of the ${\rm EC}_{50}^d$ dose.
In 
Figure \ref{fig:optimalheaviside}, we assume that $\Delta \mu = 0.004$, $\Delta \nu = 0.003$ and $c_{\rm max}=10$.
Unless otherwise noted, we assume that the 
proportion of sensitive cells at the start of treatment is the equilibrium proportion $\bar{f}_0(0)$ in the absence of drug.

We note that with this parametrization, the net growth rate of tolerant cells taking phenotypic switching into account is $\lambda_1-\nu_0 = -0.003 < 0$.
Thus, in the presence of drug, the tolerant population decays over time, but much slower than the sensitive cells.

\begin{table}[h]
\centering
\begin{tabular}{|c|c|c|c|c|c|c|c|c|c|c|c|}
\hline
Parameter&$\lambda_0$ & $\Delta d_0$& $\lambda_1$& $\mu_0$& $\nu_0$&$k$&$m$&$\Delta \mu$&$\Delta \nu$&$c_{\rm max}$ \\
\hline
Value&0.04&0.08&0.001&0.0004&0.004&0.0004&0.0004&0.004&0.003&10 \\
\hline
\end{tabular}
\caption{
Baseline parameter values used in the study.
}
\label{table:parameters}
\end{table}

\noindent {\bf Data availability.} 
All data generated or analyzed during this study are included in this published article and its supplementary information files. \\

\noindent {\bf Code availability.} All codes necessary to reproduce the results of this study 
are available in the Github repository \url{https://github.com/egunnars/optimal_dosing_drug-induced_plasticity}.  \\

\noindent {\bf Acknowledgments.} The work of EBG was supported in part by NIH/NCI grant R01 CA241137.
The work of JF was supported in part by NSF grants CMMI-2228034 and DMS-2052465.  \\

\noindent {\bf Competing interests.} All authors declare no financial or non-financial competing interests.  \\

\bibliographystyle{naturemag}
\bibliography{epi}

\newpage

\appendix

\makeatletter
\renewcommand\thesection{S\@arabic\c@section}
\renewcommand\thefigure{\thesection.\arabic{figure}} 
\renewcommand\thetable{\thesection.\arabic{table}} 
\renewcommand\theequation{\thesection.\arabic{equation}} 

\setcounter{figure}{0}  
\setcounter{table}{0}  
\setcounter{equation}{0} 
\setcounter{page}{1}

\section{Supplementary text}

\subsection{Review of mathematical model and optimal control problem}

In the interest of self-containment, we briefly review the mathematical model and the optimal control problem studied in the main text.
In the model, cells are able to transition between two states, a drug-sensitive (type-0) and a drug-tolerant (type-1) state.
Type-0 cells divide at rate $b_0$, die at rate $d_0$ and transition to type-1 at rate $\mu$.
Type-1 cells divide at rate $b_1$, die at rate $d_1$ and transition to type-0 at rate $\nu$.
The net birth rates of the two cell types are $\lambda_0 := b_0-d_0$ and $\lambda_1 := b_1-d_1$.
To model the effect of the anti-cancer drug on sensitive cell proliferation, we assume that the death rate $d_0(c)$ increases with the drug dose $c$ (equivalently, it can be assumed that the drug affects the division rate $b_0$).
To model drug-induced tolerance, we assume that one or both of the transition rates $\mu$ and $\nu$ are dose-dependent.
Two specific functional forms for $\mu(c)$ and $\nu(c)$ are considered, modeling either linear or uniform induction of tolerance.
An overview of the model is given in Figure \ref{fig:model_overviews_suppl}.

\begin{figure}
    \centering
    \includegraphics[scale=0.9]{Figure1.pdf}
      \caption{{\bf Mathematical model of drug-induced tolerance (reprint of Figure \ref{fig:model_overviews} from main text).}
    {\bf (a)} In the mathematical model, cells transition between two states, a drug-sensitive (type-0) state and a drug-tolerant (type-1) state.
    Sensitive cells divide at rate $b_0$, die at rate $d_0$ and transition to tolerance at rate $\mu$.
    Tolerant cells divide at rate $b_1$, die at rate $d_1$ and transition to sensitivity at rate $\nu$.
    The sensitive cell death rate $d_0$ and the transition rates $\mu$ and $\nu$ 
    depend on the drug dose $c$ (Section \ref{sec:model}).
    {\bf (b)} The sensitive cell death rate follows a Michaelis-Menten equation of the form $d_0(c) = d_0 + (d_{\rm max}-d_0) c/(c+1)$, where $d_0$ is the death rate in the absence of drug and 
    $d_{\rm max}$ is the saturation death rate under an arbitrarily large drug dose.
    The dose $c$ is normalized to the ${\rm EC}_{50}$ dose for $d_0$, meaning 
    that the drug has half the maximal effect at dose $c=1$ (Section \ref{sec:drugeffectprolif}).
    {\bf (c)} The transition rate $\mu(c)$ from drug-sensitivity to drug-tolerance is assumed either a linearly increasing function of  $c$ or to be uniformly elevated in the presence of drug.
    These two forms of drug-induced tolerance were observed in a  recent study by Russo et al.~\cite{russo2022modified} and they also emerge as limiting cases of more general Michaelis-Menten dynamics (Section \ref{sec:drugeffectswitching}).
    {\bf (d)} The transition rate $\nu(c)$ from drug-tolerance to drug-sensitivity is similarly assumed either a linearly decreasing function of $c$ or uniformly inhibited in the presence of drug (Section \ref{sec:drugeffectswitching}).
    }
        \label{fig:model_overviews_suppl}
\end{figure}

The model dynamics can be described by the differential equations
\begin{align*} 
        & \frac{dn_0}{dt} = (\lambda_0(c)-\mu(c))n_0+\nu(c)n_1, \\
    & \frac{dn_1}{dt} = (\lambda_1-\nu(c))n_1+\mu(c)n_0,
\end{align*}
where $n_i(t)$ denotes the {expected number} of cells of type-$i$,
and 
the dose is viewed as a function of time, $c(t)$.
The proportion $f_0(t) = n_0(t)/(n_0(t)+n_1(t))$ of sensitive cells at time $t$ is governed by the differential equation
\begin{align} \label{eq:df0ftsuppl}
    \frac{df_0}{dt} &= \big(\lambda_1-\lambda_0(c)\big)f_0^2 - \big(\lambda_1-\lambda_0(c)+\mu(c)+\nu(c)\big)f_0+\nu(c). 
\end{align}
Under a constant dose $c(t)=c$, the model dynamics simplify, and the tumor eventually grows exponentially at rate
\begin{align*}
     \sigma(c) = \frac12 \Big(&\lambda_0(c)-\mu(c) + \lambda_1-\nu(c) + \sqrt{(\lambda_0(c)-\mu(c)-\lambda_1+\nu(c))^2+4\mu(c)\nu(c)}\Big),
\end{align*}
which we refer to as the {\em equilibrium growth rate} under a constant dose $c$,
and the proportion of sensitive cells eventually reaches the equilibrium value
\begin{align*} 
    \bar{f}_0(c) = \frac{\nu(c)}{\sigma(c)-\lambda_0(c)+\mu(c)+\nu(c)}.
\end{align*}

Under a pulsed schedule $(c,\varphi)$  alternating arbitrarily fast between the dose $c$ and no dose, if we define the following average rates over each treatment cycle,
\begin{align} \label{eq:avgratessuppl}
\begin{split}
            & \bar\lambda_0(c,\varphi) := \varphi \lambda_0(c) + (1-\varphi)\lambda_0 = \lambda_0 + \varphi (\lambda_0(c)-\lambda_0), \\
    &  \bar\mu(c,\varphi) := \varphi \mu(c) + (1-\varphi)\mu_0 = \mu_0 + \varphi (\mu(c)-\mu_0), \\
    &  \bar\nu(c,\varphi) := \varphi \nu(c) + (1-\varphi)\nu_0 = \nu_0 + \varphi (\nu(c)-\nu_0),
\end{split}
\end{align}
then $f_0(t)$ obeys the differential equation
\begin{align} \label{eq:df0ftrapidpulsesuppl}
    \frac{df_0}{dt} &= \big(\lambda_1-\bar\lambda_0(c,\varphi)\big)f_0^2 - \big(\lambda_1-\bar\lambda_0(c,\varphi)+\bar\mu(c,\varphi)+\bar\nu(c,\varphi)\big)f_0+\bar\nu(c,\varphi).
\end{align}
In this case, the tumor eventually grows at exponential rate 
\begin{align} \label{eq:sigmapulsedsuppl}
    \sigma(c,\varphi) =  \frac12 \Big(&(\bar\lambda_0(c,\varphi)-\bar\mu(c,\varphi)) + (\lambda_1-\bar\nu(c,\varphi)) \nonumber \\
    &+ \sqrt{((\bar\lambda_0(c,\varphi)-\bar\mu(c,\varphi))-(\lambda_1-\bar\nu(c,\varphi)))^2+4\bar\mu(c,\varphi)\bar\nu(c,\varphi)}\Big),
\end{align}
and the proportion of sensitive cells eventually reaches the equilibrium value
\begin{align*} 
    \bar{f}_0(c,\varphi) = \frac{\bar\nu(c,\varphi)}{\sigma(c,\varphi)-\bar\lambda_0(c,\varphi)+\bar\mu(c,\varphi)+\bar\nu(c,\varphi)}.
\end{align*}

We consider a dosing strategy $c: [0,T] \to [0,c_{\rm max}]$ optimal if it minimizes the total tumor size, $n(T) = n_0(T)+n_1(T)$, at the end of a finite time-horizon $[0,T]$.
The total tumor size $n(t)$ at time $t$
follows the differential equation
\begin{align} \label{eq:dndtsuppl}
        \frac{dn}{dt} & = \big((\lambda_0(c)-\lambda_1)f_0+\lambda_1\big)n,
\end{align}
which implies that we can 
reframe our optimal control problem 
in terms of minimizing
\begin{align} \label{eq:optcontobjsuppl}
  \int_0^T u(c,f_0) dt,
\end{align}
where 
\begin{align} \label{eq:instgrowthratesuppl}
    u(c,f_0) := (\lambda_0(c)-\lambda_1)f_0+\lambda_1
\end{align}
is the instantaneous tumor growth rate.
In our analysis of optimal dosing strategies over a long time horizon,
we also consider the problem of minimizing the {\em long-run average tumor growth rate}
\begin{align} \label{eq:longrunavgratesuppl}
    \bar{u}_\infty(c) = \limsup_{T \to \infty} \frac1T \int_0^T u(c,f_0)dt,
\end{align}
where $c: [0,\infty) \to [0,c_{\rm max}]$ is considered an infinite-horizon treatment.

\subsection{Implementation of forward-backward sweep method} \label{app:fbsm}

For the case of linear induction of tolerance, we use the forward-backward sweep method \citesupp{mcasey2012convergence,sharp2022numerical} to solve the optimal control problem
\begin{align*}
    \min_{c} \int_0^T u(c,f_0) dt,
\end{align*}
where 
$f_0(t)$ is governed by the differential equation 
\begin{align*} 
    \frac{df_0}{dt} &= g(f_0,c)
\end{align*}
with
\[
    g(f_0,c) := \big(\lambda_1-\lambda_0(c)\big)f_0^2 - \big(\lambda_1-\lambda_0(c)+\mu(c)+\nu(c)\big)f_0+\nu(c). 
\]
The so-called {\em Hamiltonian} for this problem is
\begin{align} \label{eq:hamiltonian}
    {\cal H}(f_0,c,\gamma) = u(f_0,c)+\gamma g(f_0,c),
\end{align}
where $\gamma = (\gamma(t))_{t \in [0,T]}$ is a Lagrangian multiplier function.
By Pontryagin's principle, the optimal solution 
$c^\ast = (c^\ast(t))_{t \in [0,T]}$ must satisfy
\begin{align*}
    c^\ast(t) = {\rm argmin}_c {\cal H}(f_0,c,\gamma)
\end{align*}
for each $t \in [0,T]$, 
where $f_0 = (f_0(t))_{t \in [0,T]}$ and $(\gamma(t))_{t \in [0,T]}$ satisfy the differential equations
\begin{align*}
        & \frac{\partial {\cal H}}{\partial \gamma} = f_0', \\
    & \frac{\partial {\cal H}}{\partial f_0} = -\gamma'.
\end{align*}
The boundary conditions are $f_0(0) = \alpha$ for some fixed $\alpha \in [0,1]$ and $\gamma(T) = 0$.
For our problem, the two differenial equations above become
\begin{align}
    & 
    f_0' = 
     (\lambda_1-\lambda_0(c))f_0^2 - (\lambda_1-\lambda_0(c)+\mu(c)+\nu(c))f_0+\nu(c), \quad f_0(0) = \alpha, \label{eq:forwardode} \\
         & 
    \gamma' = 
    -\big(2(\lambda_1-\lambda_0(c))f_0-(\lambda_1-\lambda_0(c)+\mu(c)+\nu(c))\big)\gamma
    -(\lambda_0(c)-\lambda_1), \quad \gamma(T) = 0. \label{eq:backwardode}
\end{align}
To compute the optimal solution, we apply the forward-backward sweep method, which is an iterative procedure that proceeds as follows:
\begin{enumerate}[(1)]
\item We discretize time and consider a set of time points ${\cal T} = [0,t_1,t_2,\ldots,T]$.
    \item We make an initial guess for the optimal policy $(c(t))_{t \in {\cal T}}$. We simply use $c(t) = 0$ for all $t \in {\cal T}$.
    \item We solve \eqref{eq:forwardode} using the initial condition $f_0(0)=\alpha$, assuming the initial guess for $(c(t))_{t \in {\cal T}}$.
\item We solve \eqref{eq:backwardode} {\em backwards in time} using the initial condition $\gamma(T)=0$, assuming the initial guess for $(c(t))_{t \in {\cal T}}$ and the trajectory for $(f_0(t))_{t \in {\cal T}}$ from step (3).
\item Using the obtained trajectories for $(f_0(t))$ and $(\gamma(t))$, for each fixed $t \in {\cal T}$ we compute the optimal dose at time $t$ as
\begin{align*}
    c^\ast(t) = {\rm argmax}_c {\cal H}(f_0,c,\gamma),
\end{align*}
which boils down to solving
\begin{align} \label{eq:optimalitycondition3}
    \frac{\partial {\cal H}}{\partial c} = 0.
\end{align}
For our problem, this condition can be written as a second-degree polynomial equation in $c$.
Since
\begin{align*}
    \frac{\partial u}{\partial c} &= f_0 \frac{\partial \lambda_0}{\partial c}, \\
    \frac{\partial g}{\partial c} &= 
    -f_0^2 \frac{\partial \lambda_0}{\partial c} +
    f_0 \frac{\partial \lambda_0}{\partial c} - 
    f_0 \frac{\partial \mu}{\partial c} -
    f_0 \frac{\partial \nu}{\partial c} +
    \frac{\partial \nu}{\partial c},
\end{align*}
we have
\begin{align*}
    \frac{\partial \cal H}{\partial c} &= 
    \frac{\partial u}{\partial c} + \gamma \frac{\partial g}{\partial c} \\ &= 
    (f_0 - \gamma f_0^2 + \gamma f_0) \frac{\partial \lambda_0}{\partial c} - \gamma f_0 \frac{\partial \mu}{\partial c} + \gamma (1 - f_0) \frac{\partial \nu}{\partial c}.
\end{align*}
Now,
\begin{align*}
    \lambda_0(c) = \lambda_0 - \Delta d_0 \cdot \frac{c}{c+1} &\quad\Rightarrow\quad \frac{\partial \lambda_0}{\partial c} = - \frac{\Delta d_0}{(c+1)^2}, \\
    \mu(c) = \mu_0 + kc &\quad\Rightarrow\quad \frac{\partial \mu}{\partial c} = k, \\
     \nu(c) = \nu_0 - mc &\quad\Rightarrow\quad \frac{\partial \nu}{\partial c} = -m.
\end{align*}
If we set
\begin{align*}
    & a(f_0,\gamma) := \Delta d_0 f_0  (\gamma f_0-1-\gamma), \\
    & b(f_0,\gamma) := - \gamma k f_0 + \gamma m (f_0-1), \\
\end{align*}
we then have
\begin{align*}
    \frac{\partial \cal H}{\partial c} = \frac{a}{(c + 1)^2} + b,
\end{align*}
and we obtain the second-degree polynomial equation
\begin{align*}
    bc^2 + 2bc + (a + b) = 0.
\end{align*}
\item Steps (3)--(5) are repeated until convergence. 
\end{enumerate}
Figure \ref{fig:optimallinear} in the main text shows optimal dosing strategies under linear induction of tolerance computed using our implementation of the forward-backward sweep method.
Theser results are consistent with computations obtained using the Dymos optimal control library in Python together with the IPOPT solver, as shown in Figure \ref{fig:dymos}.

\subsection{Mathematical analysis of optimal long-run treatment} \label{app:proofs}

In this section, we wish to prove two claims made in the main text:
\begin{itemize}
    \item For linear induction of tolerance, it is optimal in the long run to apply the dose which minimizes the equilibrium growth rate $\sigma(c)$ under a constant dose $c$.
    \item For uniformly induced tolerance, arbitrarily fast alternation between the maximum dose $c_{\rm max}$ and no dose minimizes the long-run average growth rate.
\end{itemize}
We do this in several steps in the following four subsections.

\subsubsection{Constant-parameter models}

We first establish a simple and intuitive lemma saying that 
a lower proliferation potential for drug-sensitive cells implies a lower equilibrium growth rate.
In addition, we establish a lower bound on the equilibrium growth rate.
In what follows, we will let $M(\lambda_0,\lambda_1,\mu,\nu)$ denote the constant-parameter model with parameters $\lambda_0$, $\lambda_1$, $\mu$ and $\nu$.

\begin{lemma} \label{lemma0}
In the constant-parameter model $M(\lambda_0,\lambda_1,\mu,\nu)$, the equilibrium growth rate $\sigma = \sigma(\lambda_0,\lambda_1,\mu,\nu)$ is increasing in $\lambda_0$ and it satisfies $\sigma > \lambda_1-\nu$.
\end{lemma}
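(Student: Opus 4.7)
The plan is to prove both claims by direct algebraic manipulation of the explicit formula
\begin{align*}
\sigma = \frac{1}{2}\Big(\lambda_0 - \mu + \lambda_1 - \nu + \sqrt{(\lambda_0-\mu-\lambda_1+\nu)^2 + 4\mu\nu}\Big),
\end{align*}
exploiting that $\mu,\nu>0$ (assumed throughout Section \ref{sec:drugeffectswitching}), so that the ``cross term'' $4\mu\nu$ is strictly positive. Both conclusions reduce to elementary inequalities once things are written in the right form, and no deep structure is needed.

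For the monotonicity claim, I would just differentiate the explicit formula in $\lambda_0$. Writing $D := (\lambda_0-\mu-\lambda_1+\nu)^2 + 4\mu\nu$, the chain rule gives
\begin{align*}
\frac{\partial \sigma}{\partial \lambda_0} = \frac{1}{2}\left(1 + \frac{\lambda_0-\mu-\lambda_1+\nu}{\sqrt{D}}\right).
\end{align*}
Because $4\mu\nu>0$, one has $\sqrt{D} > |\lambda_0-\mu-\lambda_1+\nu|$ strictly, so the fractional term lies in the open interval $(-1,1)$. Consequently $\partial \sigma/\partial \lambda_0 > 0$, which is strict monotonicity.

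For the lower bound, I would introduce $x := (\lambda_0-\mu)-(\lambda_1-\nu)$ and rewrite
\begin{align*}
2\big(\sigma - (\lambda_1-\nu)\big) = x + \sqrt{x^2 + 4\mu\nu}.
\end{align*}
Again using $4\mu\nu>0$, one has $\sqrt{x^2+4\mu\nu} > \sqrt{x^2} = |x| \ge -x$, so the right-hand side is strictly positive. Dividing by two gives $\sigma > \lambda_1-\nu$.

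There is essentially no obstacle in this lemma; each claim is a two-line computation. The only conceptual point worth emphasizing in the write-up is that both strict inequalities rely on $\mu\nu > 0$: if either switching rate were allowed to vanish, the derivative could vanish (at parameter configurations with $\lambda_0-\mu \le \lambda_1-\nu$) and the lower bound would only be weak. Since the assumption $\mu_0,\nu_0 > 0$ from Section \ref{sec:drugeffectswitching} is maintained, this is harmless, but I would flag it explicitly because later proofs likely invoke the \emph{strict} monotonicity.
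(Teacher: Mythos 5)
Your proposal is correct and follows essentially the same route as the paper: both arguments work from the explicit formula and hinge on the strict inequality $\sqrt{x^2+4\mu\nu} > |x|$, which holds because $\mu\nu>0$. The only (minor) difference is in the lower bound, where the paper deduces $\sigma > \lambda_1-\nu$ from strict monotonicity together with the limit of the expression as $\lambda_0 \to -\infty$, whereas you obtain it directly from $\sqrt{x^2+4\mu\nu} > -x$ — a slightly more immediate calculation that is equally valid.
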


\begin{proof}
The equilibrium growth rate is given by the expression
\begin{align*}
     \sigma = \frac12 \left((\lambda_0-\mu) + (\lambda_1-\nu) + \sqrt{((\lambda_0-\mu)-(\lambda_1-\nu))^2+4\mu\nu}\right).
\end{align*}
Consider the function
\begin{align*}
    f(x) := \frac12 \left(x-a+\sqrt{(x-b)^2+c}\right), \quad x \in \mathbb{R},
\end{align*}
where
\begin{align*}
    &a := \mu-\lambda_1+\nu, \quad b := \mu+\lambda_1-\nu, \quad c := 4\mu\nu.
\end{align*}
The derivative of this function is
\begin{align*}
    f'(x) = \frac12\left(1 + \frac{x-b}{\sqrt{(x-b)^2+c}}\right), \quad x \in \mathbb{R}.
\end{align*} 
Since we assume $\mu,\nu>0$ (Sections \ref{sec:model} and \ref{sec:drugeffectswitching}), we have
\begin{align*}
    |x-b| < \sqrt{(x-b)^2+c},
\end{align*}
which implies that $f'(x)>0$ for all $x \in \mathbb{R}$ and $f$ is therefore strictly increasing.
Since also
\begin{align*}
    \lim_{x \to -\infty} f(x) = \frac12(b - a) = \lambda_1-\nu,
\end{align*}
we can conclude that $\sigma > \lambda_1-\nu$.
\end{proof}

We next establish a simple lemma relating the equilibrium growth rate of a constant-parameter model to the instantaneous growth rate $u(c,f_0) = (\lambda_0-\lambda_1)f_0+\lambda_1$.

\begin{lemma} \label{lemma1}
   Let $M(\lambda_0,\lambda_1,\mu,\nu)$ be a constant-parameter model with equilibrium growth rate $\sigma = \sigma(\lambda_0,\lambda_1,\mu,\nu)$ and equilibrium proportion  of sensitive cells $\bar{f}_0 = \bar{f}_0(\lambda_0,\lambda_1,\mu,\nu)$. Then
   \[
   (\lambda_0-\lambda_1) \bar{f}_0 + \lambda_1 = \sigma.
   \]
\end{lemma}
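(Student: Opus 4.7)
The plan is a short two-step argument based on the long-time behavior of the constant-parameter dynamics. First, I would recall from Section \ref{sec:constantdoseequilibrium} that the explicit solutions \eqref{eq:n0n1expl} imply $n(t) = n_0(t) + n_1(t) \sim C e^{\sigma t}$ and $f_0(t) \to \bar f_0$ as $t \to \infty$ (assuming the coefficient of the $e^{\sigma t}$ term is nonzero, which holds for generic initial conditions, and in particular when $n_0(0), n_1(0) > 0$). Second, I would invoke the differential equation $dn/dt = u(c,f_0(t)) n$ from \eqref{eq:dndtsuppl}, where $u(c,f_0) := (\lambda_0 - \lambda_1)f_0 + \lambda_1$ is the instantaneous growth rate. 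Dividing both sides by $n$ and passing to the limit $t \to \infty$ gives
\[
\sigma \;=\; \lim_{t\to\infty}\frac{n'(t)}{n(t)} \;=\; \lim_{t\to\infty} u(c,f_0(t)) \;=\; u(c,\bar f_0) \;=\; (\lambda_0 - \lambda_1)\bar f_0 + \lambda_1,
\]
which is the claimed identity.

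An alternative is to verify the identity by direct algebra. Substituting the formula $\bar f_0 = \nu/(\sigma - \lambda_0 + \mu + \nu)$ from \eqref{eq:f0bar} into $(\lambda_0-\lambda_1)\bar f_0 + \lambda_1 = \sigma$ and clearing denominators reduces the claim to an identity equivalent to the characteristic equation $(\sigma - \lambda_0 + \mu)(\sigma - \lambda_1 + \nu) = \mu\nu$, which $\sigma$ satisfies by definition as an eigenvalue of the infinitesimal generator $\mathbf{A}$ from Section \ref{sec:constantdoseequilibrium}. One small check along the way is that the denominator $\sigma - \lambda_0 + \mu + \nu$ is strictly positive, which follows from Lemma \ref{lemma0} since $\sigma > \lambda_1 - \nu$ implies $\sigma - \lambda_0 + \mu + \nu > (\lambda_1 - \lambda_0) + \mu$, and in any event positivity of the denominator was used implicitly in defining $\bar f_0$.

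I do not anticipate any real obstacle. The asymptotic argument is conceptually transparent but really just repackages facts that were established when $\sigma$ and $\bar f_0$ were first defined; the algebraic route is entirely self-contained given the explicit formulas already in hand. The main choice is stylistic, and either way the lemma follows in one or two lines from material already in the paper.
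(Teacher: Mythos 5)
Your main argument is essentially the paper's proof: the paper also combines the constant-dose equilibrium facts from Section \ref{sec:constantdoseequilibrium} with the growth-rate equation \eqref{eq:dndtsuppl}, except that it starts the model exactly at the composition $\bar f_0$ (so the growth rate equals $\sigma$ for all $t$, with no limits or genericity caveat needed), whereas you take $t\to\infty$ from a generic initial condition---a cosmetic difference. One small quibble in your optional algebraic route: the positivity of the denominator $\sigma-\lambda_0+\mu+\nu$ does not follow from Lemma \ref{lemma0}'s bound $\sigma>\lambda_1-\nu$ (since $\lambda_1-\lambda_0+\mu$ may be negative), but rather from the analogous fact $\sigma>\lambda_0-\mu$, which holds because the square root in the formula for $\sigma$ strictly exceeds $\left|(\lambda_0-\mu)-(\lambda_1-\nu)\right|$ when $\mu\nu>0$.
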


\begin{proof}
If the model is started in equilibrium, meaning that $n_0(0) = N \bar{f}_0$ and $n_1(0) = N (1-\bar{f}_0)$ for some $N>0$, the population grows exponentially at rate $\sigma$ from the start and the proportion of sensitive cells remains fixed at $\bar{f}_0$ throughout.
This can be verified by consulting the explicit expression provided in Section \ref{sec:constantdoseequilibrium} of the main text.
However, from expression \eqref{eq:dndtsuppl}, we see that in general, if the current proportion of sensitive cells is $f_0$, the instantaneous growth rate is $(\lambda_0-\lambda_1)f_0 + \lambda_1$. 
Thus, we must have $(\lambda_0-\lambda_1)\bar{f}_0 + \lambda_1 = \sigma$.
\end{proof}

\subsubsection{Supporting results for general dosing functions $\lambda_0(c)$, $\mu(c)$ and $\nu(c)$}

We next establish two supporting lemmas which hold for arbitrary functions $\lambda_0(c)$, $\mu(c)$ and $\nu(c)$ that are integrable on every finite interval $[0,T]$. We additionally assume that $\lambda_0(c)$ is decreasing in $c$ with $\lambda_0(0) = \lambda_0$ and that $\mu(c)$ and $\nu(c)$ are bounded by  $\mu_0 \leq \mu(c) \leq \mu_{\rm max}$ and $\nu_{\rm min} \leq \nu(c) \leq \nu_0$ for $c \in [0,c_{\rm max}]$ with $\mu_0,\nu_{\rm min} > 0$.

Before proceeding,
we need to define a few terms.
For a given infinite-horizon treatment $c: [0,\infty) \to [0,c_{\rm max}]$, we define a {\em treatment segment} to be the restriction $c|_I: I \to [0,c_{\rm max}]$ of $c$ to $I$, where $I \subseteq [0,\infty)$ can be written as a finite disjoint union of closed intervals
\[
I = \bigcup_{i=1}^n [a_i,b_i]
\]
 with $a_i \leq b_i$ for all $i=1,\ldots,n$, and
\begin{align} \label{eq:connectingsegments}
f_0(b_i) = f_0(a_{i+1}), \quad i=1,\ldots,n-1.    
\end{align}
In other words, $c|_I$ is a treatment on the bounded set $I$ which starts with the sensitive cell proportion $f_0(a_1)$.
We also define the average growth rate of the segment,
\[
\bar{u}_I(c) = \frac1{|I|} \int_I \left((\lambda_0(c)-\lambda_1)f_0+\lambda_1\right) dt,
\]
where $|I|$ is the total length of the intervals that comprise $I$ (the Lebesgue measure of $I$).
We say that the treatment segment is {\em $\varepsilon$-stable} if 
\begin{align} \label{eq:epsilonstable}
\sup_{x,y \in I} |f_0(x)-f_0(y)| \leq \varepsilon. 
\end{align}
We now show that if a treatment segment is $\varepsilon$-stable, its average growth rate can be approximated by the equilibrium growth rate of a constant-parameter model.
The constant parameters are obtained by taking averages of the model parameters over the segment.

\begin{lemma} \label{lemma2}
Let $\varepsilon>0$
and let $c|_I: I \to [0,c_{\rm max}]$ be an $\varepsilon$-stable treatment segment with length $|I| \geq 1$ and average growth rate $\bar{u}_I(c)$.
Let $\bar{\lambda}_0(I)$, $\bar{\mu}(I)$ and $\bar{\nu}(I)$ denote the average values of the respective parameters over $I$:
\begin{align*}
    & \bar{\lambda}_0(I) = \frac1{|I|} \int_I \lambda_0(c)dt, \quad \bar{\mu}(I) = \frac1{|I|} \int_I \mu(c)dt, \quad \bar{\nu}(I) = \frac1{|I|} \int_I \nu(c)dt,
\end{align*}
and let $\sigma = \sigma(\bar\lambda_0(I),\lambda_1,\bar\mu(I),\bar\nu(I))$ denote  the equilibrium growth rate of the constant-parameter model $M(\bar{\lambda}_0(I), \lambda_1, \bar{\mu}(I), \bar{\nu}(I))$.
Then, there exist constants $C,D>0$ which only depend on $\lambda_0$, $\lambda_1$, $\mu_0$, $\mu_{\rm max}$, $\nu_0$ and $\nu_{\rm min}$ so that 
\[
\varepsilon \leq C \quad\Rightarrow\quad \sigma \leq \bar{u}_I(c)+ D\varepsilon.
\]
\end{lemma}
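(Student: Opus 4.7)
The plan is to exploit the $\varepsilon$-stability to show that $f_0$ is essentially constant on $I$, then use this to reduce the problem to a comparison with the equilibrium dynamics of the constant-parameter model $M(\bar\lambda_0(I),\lambda_1,\bar\mu(I),\bar\nu(I))$. I would introduce the reference value $\hat f_0 := \tfrac{1}{|I|}\int_I f_0\,dt$; by \eqref{eq:epsilonstable}, one has $|f_0(t)-\hat f_0|\le\varepsilon$ for all $t\in I$. Writing $I=\bigcup_i [a_i,b_i]$ and using the connecting condition \eqref{eq:connectingsegments}, the sum $\sum_i(f_0(b_i)-f_0(a_i))$ telescopes to $f_0(b_n)-f_0(a_1)$, so $\tfrac{1}{|I|}\int_I f_0'\,dt \le \varepsilon/|I|\le\varepsilon$, which is the place the hypothesis $|I|\ge 1$ gets used.

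Next, I would average the ODE \eqref{eq:df0ftsuppl} term-by-term over $I$. Substituting $f_0=\hat f_0+\delta(t)$ with $|\delta|\le\varepsilon$ and using that $\lambda_0(c),\mu(c),\nu(c)$ lie in fixed bounded intervals, one obtains
\[
\tfrac{1}{|I|}\int_I g(f_0,c)\,dt \;=\; g^\ast(\hat f_0) \;+\; R_1,
\]
where $g^\ast(f):=(\lambda_1-\bar\lambda_0(I))f^2-(\lambda_1-\bar\lambda_0(I)+\bar\mu(I)+\bar\nu(I))f+\bar\nu(I)$ is the RHS of the ODE for the averaged constant-parameter model and $|R_1|\le D_1\varepsilon$ with $D_1$ depending only on the fixed bounds on $\lambda_0,\lambda_1,\mu,\nu$. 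Combining this with the telescoping estimate from the previous paragraph gives $|g^\ast(\hat f_0)|\le D_2\varepsilon$.

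The key technical step is then to deduce that $\hat f_0$ lies within $O(\varepsilon)$ of the equilibrium $\bar f_0^\ast:=\bar f_0(\bar\lambda_0(I),\lambda_1,\bar\mu(I),\bar\nu(I))$ of the averaged model. The quadratic $g^\ast$ has discriminant $((\lambda_1-\bar\lambda_0(I))+(\bar\mu(I)-\bar\nu(I)))^2+4\bar\mu(I)\bar\nu(I)\ge 4\mu_0\nu_{\rm min}>0$, so $|{g^\ast}'(\bar f_0^\ast)|\ge 2\sqrt{\mu_0\nu_{\rm min}}$ uniformly. Because $g^\ast$ is quadratic with a uniformly bounded second derivative, a Taylor expansion at $\bar f_0^\ast$ gives $|\hat f_0-\bar f_0^\ast|\le D_3\varepsilon$, provided $\varepsilon$ is small enough that the first-order term dominates — this is the role of the smallness threshold $C$ in the lemma. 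I expect this passage from a small residual $g^\ast(\hat f_0)$ to a small error $|\hat f_0-\bar f_0^\ast|$ to be the main obstacle, since one has to keep track of constants carefully to ensure that $D_3$ depends only on the parameters listed in the statement and not on the particular segment.

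The conclusion is then bookkeeping. Using $f_0=\hat f_0+\delta$ again,
\[
\bar u_I(c) \;=\; (\bar\lambda_0(I)-\lambda_1)\hat f_0+\lambda_1 \;+\; R_2, \qquad |R_2|\le D_4\varepsilon,
\]
while Lemma~\ref{lemma1} applied to $M(\bar\lambda_0(I),\lambda_1,\bar\mu(I),\bar\nu(I))$ yields $\sigma=(\bar\lambda_0(I)-\lambda_1)\bar f_0^\ast+\lambda_1$ exactly. Subtracting,
\[
\bar u_I(c)-\sigma \;=\; (\bar\lambda_0(I)-\lambda_1)(\hat f_0-\bar f_0^\ast)+R_2,
\]
which is bounded in absolute value by $D\varepsilon$ for an appropriate $D$ depending only on $\lambda_0,\lambda_1,\mu_0,\mu_{\rm max},\nu_0,\nu_{\rm min}$; in particular $\sigma\le \bar u_I(c)+D\varepsilon$, as required.
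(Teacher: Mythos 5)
Your overall route is the same as the paper's: telescope $f_0'$ over the segment using \eqref{eq:connectingsegments} and $\varepsilon$-stability, average the ODE \eqref{eq:df0ftsuppl} to conclude that the averaged quadratic $g^\ast$ is $O(\varepsilon)$ at a reference value of $f_0$, deduce that this reference value is within $O(\varepsilon)$ of the equilibrium proportion $\bar f_0^\ast$ of the averaged constant-parameter model, and finish with Lemma~\ref{lemma1}. (Using the segment average $\hat f_0$ instead of the paper's starting value $g=f_0(a_1)$ is immaterial, since both are within $\varepsilon$ of every value of $f_0$ on $I$.)

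The one step where your sketch is not yet a proof is exactly the step the paper isolates in Section~\ref{app:technicallemma}: passing from $|g^\ast(\hat f_0)|\le D_2\varepsilon$ to $|\hat f_0-\bar f_0^\ast|\le D_3\varepsilon$. The Taylor expansion at $\bar f_0^\ast$ gives $|{g^\ast}'(\bar f_0^\ast)|\,|h|\le D_2\varepsilon+|\lambda_1-\bar\lambda_0(I)|\,h^2$ with $h=\hat f_0-\bar f_0^\ast$, and "the first-order term dominates" requires $|h|$ to be small already --- which is what you are trying to prove; smallness of $\varepsilon$ alone does not supply it. Concretely, a quadratic has two roots, and a small residual only places $\hat f_0$ near \emph{one} of them; you must rule out that $\hat f_0\in[0,1]$ sits near the second root lying just outside $[0,1]$. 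This can be closed without any Taylor expansion: writing $g^\ast(x)=(\lambda_1-\bar\lambda_0(I))(x-\bar f_0^\ast)(x-r_2)$ and using $g^\ast(0)=\bar\nu\ge\nu_{\rm min}$ and $g^\ast(1)=-\bar\mu\le-\mu_0$, one gets $|x-r_2|\ge\min(\mu_0,\nu_{\rm min})/|\lambda_1-\bar\lambda_0(I)|$ for all $x\in[0,1]$, hence $|g^\ast(x)|\ge\min(\mu_0,\nu_{\rm min})\,|x-\bar f_0^\ast|$ on $[0,1]$ (the degenerate linear case $\bar\lambda_0(I)=\lambda_1$ is immediate since the slope is $-(\bar\mu+\bar\nu)$), which yields $|\hat f_0-\bar f_0^\ast|\le D_2\varepsilon/\min(\mu_0,\nu_{\rm min})$ with constants depending only on the listed parameters. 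The paper instead carries out explicit quadratic-formula estimates on the roots, with the threshold $C_1\varepsilon\le\bar\nu$ playing the role of keeping the second root's neighborhood out of $[0,1]$; either way, this missing argument must be supplied before your step ``a Taylor expansion gives $|\hat f_0-\bar f_0^\ast|\le D_3\varepsilon$'' is justified. The rest of your bookkeeping (the $O(\varepsilon)$ remainders and the final subtraction using Lemma~\ref{lemma1}) is correct and matches the paper.
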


\begin{proof}
Write $I = \bigcup_{i=1}^n [a_i,b_i]$ and let $g := f_0(a_1)$ be the starting proportion of sensitive cells on the treatment segment.
Set
\begin{align*}
    \alpha(c) := \lambda_1 - \lambda_0(c), \quad \beta(c) := -\lambda_1 + \lambda_0(c) - \mu(c) - \nu(c).
\end{align*}
By \eqref{eq:df0ftsuppl}, we have that
$$
    f_0' = 
     \alpha(c) f_0^2 + \beta(c) f_0+\nu(c),
$$
which implies by \eqref{eq:connectingsegments},
\begin{align} \label{eq:integraldifferential}
\left| \int_I \left(\alpha(c)f_0^2 + \beta(c)f_0 + \nu(c)\right) dt \right| = \left| \sum_{i=1}^n (f_0(b_i)-f_0(a_i)) \right| = \left|f(b_n)-f(a_1)\right| \leq \varepsilon.
\end{align}
Now, let
\begin{align*}
    \bar{\alpha}(I) := \frac1{|I|} \int_I \alpha(c)dt, \quad \bar{\beta}(I) := \frac1{|I|} \int_I \beta(c)dt,
\end{align*}
be the averages values of $\alpha(c)$ and $\beta(c)$ over $I$, where we note that
\begin{align*}
    \frac1{|I|} \int_I |\bar{\alpha}(c)| \leq \lambda_0+\lambda_1, \quad \frac1{|I|} \int_I |\bar{\beta}(c)| \leq \lambda_0+\lambda_1+\mu_{\rm max}+\nu_0.
\end{align*}
In what follows, we write $\bar\alpha = \bar\alpha(I)$, $\bar\beta = \bar\beta(I)$ and $\bar\nu = \bar\nu(I)$ for ease of notation.
Note that
\begin{align*}
    & \bar{\alpha}g^2 + \bar{\beta}g + \bar{\nu} \\
     =& \frac{1}{|I|} \int_I \left(\alpha(c)g^2 + \beta(c)g + \nu(c)\right) dt \\
     =& \frac{1}{|I|} \int_I \left(\alpha(c)f_0^2 + \beta(c)f_0 + \nu(c)\right) dt +  \frac{1}{|I|} \int_I  \left(\alpha(c)(g^2-f_0^2) + \beta(c)(g - f_0)\right) dt,
\end{align*}
which implies by \eqref{eq:integraldifferential}, the fact that $c|_I$ is $\varepsilon$-stable and the assumption $|I| \geq 1$,
\begin{align} \label{eq:Pg}
    \left|  \bar{\alpha}g^2 + \bar{\beta}g + \bar{\nu} \right| &\leq \frac\varepsilon{|I|} + \varepsilon \left(\frac2{|I|} \int_I |\bar{\alpha}(c)|+ \frac1{|I|} \int_I |\bar{\beta}(c)|\right) \leq 
    C_1\varepsilon, 
\end{align}
where $C_1 :=  1+3(\lambda_0+\lambda_1)+\mu_{\rm max}+\nu_0 >0$.
Now, let  $\bar{f}_0 = \bar{f}_0(\bar{\lambda}_0(I), \lambda_1, \bar{\mu}(I), \bar{\nu}(I))$ be the equilibrium proportion of sensitive cells under the constant-parameter model $M(\bar{\lambda}_0(I), \lambda_1, \bar{\mu}(I), \bar{\nu}(I))$.
By Lemma \ref{lemma1},
\begin{align*}
    \sigma = (\bar{\lambda}_0(I)-\lambda_1) \bar{f}_0 + \lambda_1.
\end{align*}
Set $P(x) := \bar{\alpha}x^2 + \bar{\beta}x + \bar{\nu}$ and note that $P(\bar{f}_0)=0$.
Since $P(0)= \bar{\nu}>0$ and $P(1) = -\bar{\mu}<0$, $\bar{f}_0$ is the unique root of $P$ in $[0,1]$.
Moreover, $|P(g)| \leq C_1\varepsilon$ by \eqref{eq:Pg}, which suggests that $g$ and $\bar{f}_0$ are close to each another when $\varepsilon$ is small.
In fact, by applying a linear approximation of $P$ around $\bar{f}_0$, we can show that the distance between $g$ and $\bar{f}_0$ is of order $\varepsilon$.
More precisely, there exist constants $C_2,C_3>0$ 
so that if $\varepsilon \leq C_2$, then
\begin{align*}
    |g-\bar{f}_0| \leq C_3\varepsilon,
\end{align*}
where the constants $C_2,C_3>0$ only depend on 
$\lambda_0$, $\lambda_1$, $\mu_0$, $\mu_{\rm max}$, $\nu_0$ and $\nu_{\rm min}$.
Since this statement is intuitive and the proof involves straightforward calculations,
we defer the proof to Section \ref{app:technicallemma}.
Using the statement, we obtain since $c|_I$ is $\varepsilon$-stable,
\begin{align*}
\bar{u}_I(c) &= \frac1{|I|} \int_I \left((\lambda_0(c)-\lambda_1)f_0+\lambda_1\right) dt \\
&= \frac1{|I|} \int_I \left((\lambda_0 (c) - \lambda_1)g + \lambda_1\right) dt + \frac1{|I|} \int_I (\lambda_0 (c) - \lambda_1)(f_0 - g) dt \\
&= (\bar{\lambda}_0(I) - {\lambda}_1)g + {\lambda}_1 + O(\varepsilon) \\
&= (\bar{\lambda}_0(I) - {\lambda}_1)\bar{f}_0 + {\lambda}_1 + O(\varepsilon) \\
&= \sigma + O(\varepsilon),
\end{align*}
which is sufficient to prove the result.
\end{proof}

We next show that for any treatment $c: [0,\infty) \to [0,c_{\rm max}]$ satisfying a certain regularity condition, we can find an $\varepsilon$-stable treatment segment $c|_I$ with approximately the same average growth rate as the long-run average growth rate $\bar{u}_\infty(c)$ of $c$,
given by \eqref{eq:longrunavgratesuppl}.
The regularity condition is imposed to exclude treatments that make infinitely many dose changes over a finite time interval.
The condition is for example satisfied if there exists a $\delta>0$ so that each selected dose is given for at least $\delta$ time units, which is a reasonable condition in practice.

\begin{lemma} \label{lemma3}
Let $c: [0,\infty) \to [0,c_{\rm max}]$ be an infinite-horizon treatment so that the associated sensitive cell proportion trajectory satisfies the following regularity condition:
\begin{equation}
  \tag{A}\label{eq:A}
  \parbox{\dimexpr\linewidth-4em}{%
    \strut
    For each closed interval $[d,e] \subseteq [0,1]$ and each $T \geq 0$, $f_0^{-1}([d,e]) \cap [0,T]$ is a finite union of disjoint closed intervals $[a_i,b_i]$ where possibly $a_i=b_i$.
    \strut
  }
\end{equation}
Then, for each $\varepsilon>0$, there exists an $\varepsilon$-stable treatment segment $c|_I$ so that
\[
|I| \geq 1 \quad\text{and}\quad \bar{u}_I(c) \leq \bar{u}_\infty(c) + \varepsilon.
\]
\end{lemma}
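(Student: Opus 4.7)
The plan is to discretize the range of $f_0$ and obtain the desired segment as a preimage under $f_0$. Fix $\varepsilon>0$ and let $N := \lceil 1/\varepsilon \rceil$. Partition $[0,1]$ into $N$ closed subintervals $I_1,\ldots,I_N$ each of length at most $\varepsilon$. For a large $T$ (to be chosen below), set $J_k := f_0^{-1}(I_k) \cap [0,T]$. By regularity condition (\ref{eq:A}), each $J_k$ is a finite disjoint union of closed intervals $\bigcup_{j=1}^{n_k}[a_j,b_j]$, so $J_k$ is at least a candidate for the segment we seek.

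The main technical step is to verify that every $J_k$ is automatically a valid $\varepsilon$-stable treatment segment. Since $f_0$ maps $J_k$ into $I_k$, the bound $|f_0(x)-f_0(y)| \leq |I_k| \leq \varepsilon$ for $x,y \in J_k$ is immediate. The connection condition $f_0(b_j) = f_0(a_{j+1})$ requires a short topological observation: on the open gap $(b_j,a_{j+1})$ between consecutive components of $J_k$, the continuous function $f_0$ takes values in the disconnected set $[0,1]\setminus I_k$, so by the intermediate value theorem it remains entirely on one side of $I_k$ throughout the gap. Continuity at the gap endpoints then forces both $f_0(b_j)$ and $f_0(a_{j+1})$ to equal the same endpoint of $I_k$. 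I expect the bulk of the argument's delicacy to sit here, but this observation handles it cleanly.

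Once the segments are identified, a pigeonhole step matches their average growth rate to $\bar u_\infty(c)$. Set $T_k := |J_k|$, $\bar u_k := \bar u_{J_k}(c)$, and note that the instantaneous growth rate $u(c,f_0) = (\lambda_0(c)-\lambda_1)f_0+\lambda_1$ is bounded uniformly by some constant $M$ depending only on $\lambda_0$, $\Delta d_0$ and $\lambda_1$. Since $\sum_k T_k = T$ and $\sum_k T_k \bar u_k = T\bar u_T(c)$, restricting to the set $K := \{k : T_k \geq 1\}$ (which is non-empty whenever $T>N$) and using $\sum_{k \notin K} T_k \leq N$ together with $|\bar u_k|\leq M$ gives
\begin{align*}
    \min_{k \in K} \bar u_k \leq \frac{\sum_{k\in K} T_k \bar u_k}{\sum_{k\in K} T_k} \leq \bar u_T(c) + \frac{2MN}{T-N}.
\end{align*}
Now pick $T$ so large that $2MN/(T-N) \leq \varepsilon/2$, which is possible since $M$ and $N$ are fixed, and so that $\bar u_T(c) \leq \bar u_\infty(c) + \varepsilon/2$, which is possible since $\limsup_{T \to \infty} \bar u_T(c) = \bar u_\infty(c)$. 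Choosing $k^*$ attaining the minimum, the segment $I := J_{k^*}$ then satisfies $|I|\geq 1$ and $\bar u_I(c) \leq \bar u_\infty(c) + \varepsilon$, as required.
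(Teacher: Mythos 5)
Your proposal is correct and follows essentially the same route as the paper's proof: partition the range of $f_0$ into subintervals of length at most $\varepsilon$, use condition \eqref{eq:A} to write each preimage as a finite union of closed intervals, and select a bin of total length at least $1$ whose average growth rate is within $\varepsilon$ of $\bar{u}_\infty(c)$ by an averaging argument over a large horizon $T$ (your explicit estimate $2MN/(T-N)$ at a single large $T$ replaces the paper's limsup-based contradiction, but the underlying decomposition and pigeonhole idea are the same). Your intermediate-value-theorem check that consecutive components of each preimage satisfy the connection condition $f_0(b_i)=f_0(a_{i+1})$ makes explicit a detail the paper leaves implicit, which is a welcome addition.
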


\begin{proof}
We begin by selecting $d_0,d_1,d_2,\ldots,d_n \in [0,1]$ so that $d_0 = 0$, $d_n=1$ and $0<d_{k}-d_{k-1} \leq \varepsilon$ for all $k = 1,\ldots,n$.
By assumption \eqref{eq:A}, each set $I_k(T) := f_0^{-1}([d_{k-1},d_{k}]) \cap [0,T]$ for $k=1,\ldots,n$ and $T \geq 0$ is a finite union of disjoint closed intervals.
Note that we can write
\begin{align*}
    \bar{u}_T(c) = \bar{u}_{[0,T]}(c) = \sum_{k=1}^n \frac{|I_k(T)|}{T} \bar{u}_{I_k(T)}(c),
\end{align*}
with the understanding that $\frac{|I_k(T)|}{T} \bar{u}_{I_k(T)}(c) = 0$ when $|I_k(T)|=0$.
Now let $K := \{k \in \{1,\ldots,n\}: |I_k(T)| \geq 1 \text{ for some $T>0$}\}$ be the index set of the segments that eventually reach length 1. 
Note that $\bar{u}_{I_k(T)}(c)$ is bounded by $\lambda_0+2\lambda_1$ when $|I_k(T)| \leq 1$.
Therefore,
\begin{align*}
    &\lim_{T \to \infty} \sum_{k \in \{1,\ldots,n\} \setminus K} \frac{|I_k(T)|}{T} = 0, \quad \lim_{T \to \infty} \sum_{k \in K} \frac{|I_k(T)|}{T} = 1,\\
    &\lim_{T \to \infty} \sum_{k \in \{1,\ldots,n\} \setminus K} \frac{|I_k(T)|}{T} \bar{u}_{I_k(T)}(c) = 0.
\end{align*}
Now let $T_0$ be large enough so that $|I_k(T_0)| \geq 1$ for all $k \in K$, 
and so that for any $T \geq T_0$,
\begin{align} \label{eq:tocontradict}
    \left| \sum_{k \in \{1,\ldots,n\} \setminus K} \frac{|I_k(T)|}{T} \bar{u}_{I_k(T)}(c) \right| \leq \frac\varepsilon2.
\end{align}
Assume that $\bar{u}_{I_k(T)}(c) \geq \bar{u}_\infty(c) + {\varepsilon}$ for all $k \in K$ and $T \geq T_0$. Then for all $T \geq T_0$
\begin{align*}
    \sum_{k=1}^n \frac{|I_k(T)|}{T} \bar{u}_{I_k(T)}(c) \geq \sum_{k \in K} \frac{|I_k(T)|}{T} \bar{u}_{I_k(T)}(c) - \frac\varepsilon2 \geq (\bar{u}_\infty(c)+\varepsilon) \sum_{k \in K} \frac{|I_k(T)|}{T} - \frac{\varepsilon}2,
\end{align*}
which leads to a contradiction when we send 
$T \to \infty$, since
\begin{align*}
    \bar{u}_\infty(c) = \limsup_{T \to \infty} \bar{u}_T(c) = \limsup_{T \to \infty} \sum_{k=1}^n \frac{|I_k(T)|}{T} \bar{u}_{I_k(T)}(c).
\end{align*}
Therefore, there must exist $k$ and $T$ so that $|I_k(T)| \geq 1$ and
$\bar{u}_{I_k(T)}(c) \leq  \bar{u}_\infty(c) + {\varepsilon}.$
Now $c|_{I_k(T)}$
is the desired treatment segment.
\end{proof}

\subsubsection{Results for linear and uniform induction of tolerance}

We can now prove that for linear induction of tolerance, where $\mu(c) = \mu_0+kc$ for $k \geq 0$ and $\nu(c) = \nu_0 - mc$ for $m \geq 0$, a constant-dose treatment is optimal in the long run.

\begin{theorem} \label{thm:linearcase}
For linear induction of tolerance, 
then for any treatment $c: [0,\infty) \to [0,c_{\rm max}]$ which satisfies \eqref{eq:A}, we have that
\begin{align*}
    \min_{\tilde c \in [0,c_{\rm max}]} \sigma(\tilde c) \leq \bar{u}_\infty(c).
\end{align*}
\end{theorem}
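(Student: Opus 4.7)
The plan is to chain together Lemmas \ref{lemma0}--\ref{lemma3} and exploit one extra structural feature specific to linear induction: averaging the parameters over any treatment segment is equivalent to evaluating them at the average dose. First, given an arbitrary treatment $c$ satisfying \eqref{eq:A} and any $\varepsilon>0$, I would invoke Lemma \ref{lemma3} to produce an $\varepsilon$-stable segment $c|_I$ with $|I|\geq 1$ and $\bar{u}_I(c)\leq \bar{u}_\infty(c)+\varepsilon$. Then Lemma \ref{lemma2} (applied with $\varepsilon$ small enough that the constants $C,D$ in its statement are valid) gives
\[
\sigma\bigl(\bar{\lambda}_0(I),\lambda_1,\bar{\mu}(I),\bar{\nu}(I)\bigr)\leq \bar{u}_I(c)+D\varepsilon\leq \bar{u}_\infty(c)+(1+D)\varepsilon.
\]

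Next, I would compare the equilibrium rate of this averaged model with $\sigma$ evaluated at the time-averaged dose $\bar c:=\tfrac{1}{|I|}\int_I c\,dt$. Because $\mu(c)=\mu_0+kc$ and $\nu(c)=\nu_0-mc$ are affine, linearity of the integral yields $\bar{\mu}(I)=\mu(\bar c)$ and $\bar{\nu}(I)=\nu(\bar c)$ exactly. For $\lambda_0$, observe that $\lambda_0(c)=\lambda_0-\Delta d_0\,c/(c+1)$ is convex in $c$ on $[0,c_{\rm max}]$ since $c/(c+1)$ is concave, so Jensen's inequality gives $\bar{\lambda}_0(I)\geq \lambda_0(\bar c)$. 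By Lemma \ref{lemma0}, $\sigma$ is increasing in its first argument, so
\[
\sigma\bigl(\bar{\lambda}_0(I),\lambda_1,\bar{\mu}(I),\bar{\nu}(I)\bigr)\geq \sigma\bigl(\lambda_0(\bar c),\lambda_1,\mu(\bar c),\nu(\bar c)\bigr)=\sigma(\bar c)\geq \min_{\tilde c\in[0,c_{\rm max}]}\sigma(\tilde c).
\]

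Combining the two displays yields $\min_{\tilde c}\sigma(\tilde c)\leq \bar{u}_\infty(c)+(1+D)\varepsilon$, and since $\varepsilon>0$ was arbitrary the theorem follows.

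The main conceptual step is the Jensen/monotonicity pairing: convexity of $\lambda_0$ gives $\bar{\lambda}_0(I)\geq \lambda_0(\bar c)$, and monotonicity of $\sigma$ in $\lambda_0$ then pushes the inequality in the direction we need. This is also where the argument would break for nonlinear induction of tolerance, since for general $\mu(c)$, $\nu(c)$ we would no longer have $\bar{\mu}(I)=\mu(\bar c)$, $\bar{\nu}(I)=\nu(\bar c)$, and one would need a full monotonicity/convexity analysis of $\sigma(\lambda_0,\lambda_1,\mu,\nu)$ in the last two arguments. Routine bookkeeping aside (checking that Lemma \ref{lemma2}'s hypotheses are met, and verifying the sign of $\partial\sigma/\partial\lambda_0$ in Lemma \ref{lemma0}, which is essentially already done), the proof is a three-line sandwich, so I do not anticipate a serious technical obstacle.
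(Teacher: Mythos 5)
Your proposal is correct and follows essentially the same route as the paper's own proof: extract an $\varepsilon$-stable segment via Lemma \ref{lemma3}, compare its average growth rate to the averaged constant-parameter model via Lemma \ref{lemma2}, and then use linearity of $\mu,\nu$, Jensen's inequality for the convex $\lambda_0(\cdot)$, and the monotonicity in Lemma \ref{lemma0} to pass to $\sigma(\bar c)\geq\min_{\tilde c}\sigma(\tilde c)$ before sending $\varepsilon\to 0$. No gaps; the argument matches the paper step for step.
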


\begin{proof}
    Let $0<\varepsilon\leq C$, where $C$ is chosen as in Lemma \ref{lemma2}.
    By Lemma \ref{lemma3}, we can find an $\varepsilon$-stable treatment segment $c|_I$ so that $|I| \geq 1$ and $\bar{u}_I(c) \leq \bar{u}_\infty(c)+\varepsilon$.
    By Lemma \ref{lemma2}, the equilibrium growth rate $\rho = \sigma(\bar{\lambda}_0(I), \lambda_1, \bar{\mu}(I), \bar{\nu}(I))$ of the constant-parameter model 
    $M(\bar{\lambda}_0(I), \lambda_1, \bar{\mu}(I), \bar{\nu}(I))$
    satisfies $\rho \leq \bar{u}_I(c) + D\varepsilon$, where the constant $D>0$ is independent of $\varepsilon$.
    Therefore,
    \begin{align*}
        \rho \leq \bar{u}_I(c) + D\varepsilon \leq \bar{u}_\infty(c) + (D+1)\varepsilon.
    \end{align*}
    Let $\bar{c}(I) := \frac1{|I|} \int_I c(t)dt$ denote the average dose over $I$.
    Since $\mu(c)$ and $\nu(c)$ are linear in $c$, $\mu(\bar{c}(I)) = \bar\mu(I)$ and $\nu(\bar{c}(I)) = \bar{\nu}(I)$.
    Furthermore, since $\lambda_0$ is convex, $\lambda_0(\bar{c}(I)) \leq \bar{\lambda}_0(I)$ by Jensen's inequality.
    Then, by Lemma \ref{lemma0},
    we have that $\sigma(\bar{c}(I)) \leq \rho$, where $\sigma(c)$ denotes the equilibrium growth rate under the constant-parameter model $M(\lambda_0(c),\lambda_1,\mu(c),\nu(c))$.
    We can conclude that
    \begin{align*}
        \min_{\tilde c \in [0,c_{\rm max}]} \sigma(\tilde c) \leq \sigma(\bar{c}(I)) \leq \bar{u}_\infty + (D+1)\varepsilon.
    \end{align*}
    By sending $\varepsilon \to 0$ in the right-hand side, we obtain the desired result.
\end{proof}

We can also prove that for uniformly induced tolerance, where $\mu(0) = \mu_0$, $\mu(c) = \mu_{\rm max}$ for $c>0$ and $\nu(0) = \nu_0$, $\nu(c) = \nu_{\rm min}$ for $c>0$, 
an arbitrarily fast pulsed schedule alternating between the maximum dose $c_{\rm max}$ and no dose is {optimal}.

\begin{theorem} \label{thm:uniformcase}
For uniform induction of tolerance,
then for any treatment $c: [0,\infty) \to [0,c_{\rm max}]$ which satisfies \eqref{eq:A}, we have that
\begin{align*}
    \min_{\varphi \in [0,1]} \sigma(c_{\rm max},\varphi) \leq \bar{u}_\infty(c).
\end{align*}
\end{theorem}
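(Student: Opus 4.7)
The plan is to mirror the proof of Theorem~\ref{thm:linearcase} very closely, replacing only the step that translates the averaged constant-parameter model of Lemma~\ref{lemma2} into a single dose. First, I would fix $\varepsilon \in (0,C]$ with $C$ from Lemma~\ref{lemma2}, apply Lemma~\ref{lemma3} to obtain an $\varepsilon$-stable treatment segment $c|_I$ with $|I| \geq 1$ and $\bar{u}_I(c) \leq \bar{u}_\infty(c) + \varepsilon$, and then apply Lemma~\ref{lemma2} to conclude
\[
\sigma\bigl(\bar\lambda_0(I), \lambda_1, \bar\mu(I), \bar\nu(I)\bigr) \leq \bar{u}_\infty(c) + (D+1)\varepsilon
\]
for some $D>0$ independent of $\varepsilon$.

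Next, I would define $\varphi := |\{t \in I : c(t) > 0\}|/|I|$, the fraction of the segment during which the drug is present. The key observation, which replaces the Jensen-based step of Theorem~\ref{thm:linearcase}, is that $\mu$ and $\nu$ are two-valued step functions, so
\[
\bar\mu(I) = \varphi \mu_{\rm max} + (1-\varphi)\mu_0 = \bar\mu(c_{\rm max},\varphi), \qquad \bar\nu(I) = \varphi \nu_{\rm min} + (1-\varphi)\nu_0 = \bar\nu(c_{\rm max},\varphi),
\]
as defined in \eqref{eq:avgratessuppl}. For $\lambda_0$, convexity is replaced by monotonicity: since $\lambda_0(c)$ is decreasing in $c$, we have $\lambda_0(c(t)) \geq \lambda_0(c_{\rm max})$ on the drug-on portion, so
\[
\bar\lambda_0(I) \geq \varphi \lambda_0(c_{\rm max}) + (1-\varphi)\lambda_0 = \bar\lambda_0(c_{\rm max},\varphi).
\]
Applying Lemma~\ref{lemma0}, which states that $\sigma$ is increasing in its first argument, yields
\[
\sigma(c_{\rm max},\varphi) \leq \sigma\bigl(\bar\lambda_0(I), \lambda_1, \bar\mu(I), \bar\nu(I)\bigr) \leq \bar{u}_\infty(c) + (D+1)\varepsilon,
\]
and since $\varphi \in [0,1]$, we obtain $\min_{\varphi \in [0,1]} \sigma(c_{\rm max},\varphi) \leq \bar{u}_\infty(c) + (D+1)\varepsilon$. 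Sending $\varepsilon \to 0$ finishes the argument.

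The main obstacle is not conceptual, since the skeleton from Theorem~\ref{thm:linearcase} transfers directly; the only genuinely new ingredient is the pair of observations that the step-function structure of $\mu$ and $\nu$ makes the segment-averaged rates coincide exactly with the pulsed-schedule rates at the same exposure fraction $\varphi$, and that monotonicity of $\lambda_0$ together with Lemma~\ref{lemma0} makes $c_{\rm max}$ the worst (and hence schedule-minimizing) choice of positive dose at any fixed $\varphi$. This is precisely why the infimum in the statement is taken only over $\varphi$ rather than over the full pair $(c,\varphi)$, and it explains intuitively why, under uniform induction, there is no benefit to using any dose other than $c_{\rm max}$ whenever the drug is administered.
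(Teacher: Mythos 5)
Your proposal is correct and follows essentially the same route as the paper's own proof: Lemma~\ref{lemma3} plus Lemma~\ref{lemma2} to reduce to an averaged constant-parameter model, the observation that the two-valued structure of $\mu$ and $\nu$ makes the segment averages coincide exactly with $\bar\mu(c_{\rm max},\varphi)$ and $\bar\nu(c_{\rm max},\varphi)$ for $\varphi = |\{t\in I: c(t)>0\}|/|I|$, monotonicity of $\lambda_0$ to bound $\bar\lambda_0(I)$ from below by $\bar\lambda_0(c_{\rm max},\varphi)$, and Lemma~\ref{lemma0} to conclude $\sigma(c_{\rm max},\varphi)\leq\rho$ before sending $\varepsilon\to 0$. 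No substantive differences from the paper's argument.
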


\begin{proof}
     Let $0<\varepsilon\leq C$, where $C$ is chosen as in Lemma \ref{lemma2}.
     Like in the proof of Theorem \ref{thm:linearcase}, we can find an $\varepsilon$-stable treatment segment $c|_I$ so that the constant-parameter model $M(\bar{\lambda}_0(I), \lambda_1, \bar{\mu}(I), \bar{\nu}(I))$ has equilibrium growth rate $\rho \leq \bar{u}_\infty + (D+1)\varepsilon$.
     Now set $S := \{t \in I: c(t) > 0\}$ and note that
     \begin{align*}
         \bar\mu(I) = \frac1{|I|} \int_I \mu(c) dc = \frac1{|I|} \left(\int_S \mu(c) dc + \int_{I \setminus S} \mu(c) dx\right) = \frac{|S|}{|I|} \mu_{\rm max} + \frac{|I \setminus S|}{|I|} \mu_0.
     \end{align*}
     Similar calculations hold for $\bar\nu$. Therefore, there exists $\varphi \in [0,1]$ so that $\bar{\mu}(I) = \varphi \mu_{\rm max} + (1-\varphi) \mu_0$ and $\bar{\nu}(I) = \varphi \nu_{\rm min} + (1-\varphi) \nu_0$.
     We also have that
    \begin{align} \label{eq:uniforminductionineq}
         \bar\lambda_0(I) &= \frac1{|I|} \left(\int_S \lambda_0(c) dc + \int_{I \setminus S} \lambda_0(c) dx\right) \nonumber \\
         &\geq \frac{|S|}{|I|} \lambda_0(c_{\rm max}) + \frac{|I \setminus S|}{|I|} \lambda_0 \nonumber \\
         &= \varphi \lambda_0(c_{\rm max}) + (1-\varphi) \lambda_0.
     \end{align}
    Recall from \eqref{eq:sigmapulsedsuppl} that $\sigma(c,\varphi)$ denotes the equilibrium growth rate of the constant-parameter model $M(\bar{\lambda}_0(c,\varphi), \lambda_1, \bar\mu(c,\varphi), \bar\nu(c,\varphi))$, where $\bar\lambda_0(c,\varphi) = \varphi \lambda_0(c) + (1-\varphi) \lambda_0$, $\bar\mu(c,\varphi) = \varphi \mu(c) + (1-\varphi) \mu_0$ and $\bar\nu(c,\varphi) = \varphi \nu(c) + (1-\varphi)\nu_0$.
    Now, by \eqref{eq:uniforminductionineq} and Lemma \ref{lemma0},
    \begin{align*}
        \sigma(c_{\rm max},\varphi)
        \leq \rho,
    \end{align*}
    from which it follows that
    \begin{align*}
        \min_{\varphi \in [0,1]} \sigma(c_{\rm max},\varphi) \leq \bar{u}_\infty + (D+1)\varepsilon.
    \end{align*}
    By sending $\varepsilon \to 0$ in the right-hand side, we obtain the desired result.
\end{proof}

\subsection{Completion of the proof of Lemma \ref{lemma2}} \label{app:technicallemma}

Here, we complete the proof of Lemma \ref{lemma2}.
We want to show that there exist constants $C_2,C_3>0$ 
so that if $\varepsilon \leq C_2$, then
\begin{align*}
    |g-\bar{f}_0| \leq C_3\varepsilon,
\end{align*}
where the $C_2$ and $C_3$ only depend on 
$\lambda_0$, $\lambda_1$, $\mu_0$, $\mu_{\rm max}$, $\nu_0$ and $\nu_{\rm min}$.
Recall that $P(x) = \bar\alpha x^2 + \bar\beta x + \bar\nu$, $\bar{f}_0$ is the unique root of $P$ in $[0,1]$, and $g \in [0,1]$ satisfies $|P(g)| \leq C_1\varepsilon$.

\begin{figure}
    \centering
    \includegraphics[scale=0.6]{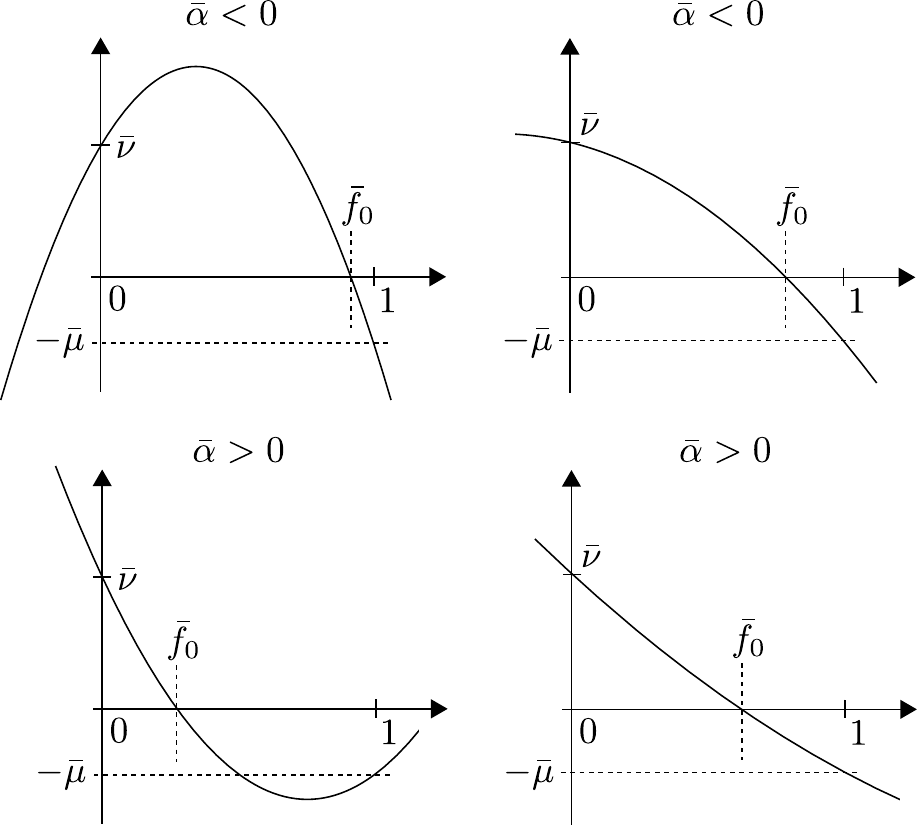}
    \caption{Potential shapes of the function $P(x)$ depending on whether $\bar{\alpha}>0$ or $\bar{\alpha}<0$.}
    \label{fig:2nddegree}
\end{figure}

For the case $\bar\alpha = 0$, $P(x)$ is linear in $x$ and the proof is straightforward.
If $\bar\alpha \neq 0$, 
Figure \ref{fig:2nddegree} shows
the four possible scenarios,
depending on the sign of $\bar\alpha$ and whether the axis of symmetry for $P$ falls within $[0,1]$.
In all cases, $P$ is locally strictly decreasing at $\bar{f}_0$ and the preimage $P^{-1}([-C_1\varepsilon,C_1\varepsilon])$ is an interval $[x_0,x_1]$ containing $g$ and $\bar{f}_0$, {assuming $\varepsilon$ is sufficiently small that $C_1\varepsilon \leq \bar\nu$}.
Now assume that $\bar\alpha<0$. Then, there are two solutions to $P(x)=0$ and $\bar{f}_0$ is the larger solution (Figure \ref{fig:2nddegree}). Therefore, since $\bar\alpha<0$,
\begin{align} \label{eq:barf0expr}
    \bar{f}_0 = \frac{-\bar\beta - \sqrt{(\bar \beta)^2-4\bar\alpha \bar\nu}}{2\bar\alpha},
\end{align}
and the endpoints $x_0,x_1$ are given by
\begin{align*}
    \begin{rcases*} x_0 \\ x_1\end{rcases*} = \frac{-\bar\beta - \sqrt{(\bar \beta)^2-4\bar\alpha(\bar\nu \mp C_1\varepsilon)}}{2\bar\alpha}.
\end{align*}
Consider first the left endpoint $x_0$, for which $x_0<\bar{f}_0$ and $P(x_0) = C_1\varepsilon$.
Note that
\begin{align*}
    \sqrt{(\bar \beta)^2-4\bar\alpha(\bar\nu - C_1\varepsilon)} &= \sqrt{(\bar \beta)^2-4\bar\alpha\bar\nu} \sqrt{1 - \frac{4|\bar\alpha|C_1}{\sqrt{(\bar \beta)^2-4\bar\alpha\bar\nu}} \varepsilon},
\end{align*}
where we use that $(\bar\beta)^2-4\bar\alpha\bar\nu>0$ since $\bar\alpha<0, \bar\nu>0$ and that $\bar\alpha = -|\bar\alpha|$.
We can now use the inequality 
$\sqrt{1-x} \geq 1-x$ for $0\leq x \leq 1$
to say that for
$\varepsilon \leq \sqrt{(\bar \beta)^2-4\bar\alpha\bar\nu}/(4|\bar\alpha|C_1)$,
\begin{align*}
    \sqrt{1 - \frac{4|\bar\alpha|C_1}{\sqrt{(\bar \beta)^2-4\bar\alpha\bar\nu}} \varepsilon} \geq 1 - \frac{4|\bar\alpha|C_1}{\sqrt{(\bar \beta)^2-4\bar\alpha\bar\nu}} \varepsilon,
\end{align*}
which implies since $\bar\alpha<0$,
\begin{align*}
&\bar{f}_0-x_0 \\
    &=\bar{f}_0 - \frac{-\bar\beta - \sqrt{(\bar \beta)^2-4\bar\alpha(\bar\nu-C_1\varepsilon)}}{2\bar\alpha} \\ 
    &= \bar{f}_0 + \frac{\bar\beta}{2\bar\alpha} + \frac{\sqrt{(\bar \beta)^2-4\bar\alpha\bar\nu} \sqrt{1 - \frac{4|\bar\alpha|C_1}{\sqrt{(\bar \beta)^2-4\bar\alpha\bar\nu}} \varepsilon}}{2\bar\alpha} \\
        &\leq \bar{f}_0 + \frac{\bar\beta}{2\bar\alpha} + \frac{\sqrt{(\bar \beta)^2-4\bar\alpha\bar\nu}}{2\bar\alpha} - \frac{\sqrt{(\bar \beta)^2-4\bar\alpha\bar\nu}}{2\bar\alpha} \cdot \frac{4|\bar\alpha|C_1}{\sqrt{(\bar \beta)^2-4\bar\alpha\bar\nu}} \varepsilon \\
    &\leq \bar{f}_0 + \frac{\bar\beta}{2\bar\alpha} + \frac{\sqrt{(\bar \beta)^2-4\bar\alpha\bar\nu}}{2\bar\alpha} + 2C_1 \varepsilon \\
    &=  2C_1 \varepsilon.
\end{align*}
where in the final step we use $\eqref{eq:barf0expr}$.
For the right endpoint $x_1$, for which $x_1>\bar{f}_0$ and $P(x_1) = -C_1\varepsilon$, we note similary that
\begin{align*}
    \sqrt{(\bar \beta)^2-4\bar\alpha(\bar\nu + C_1\varepsilon)} &= \sqrt{(\bar \beta)^2-4\bar\alpha\bar\nu} \sqrt{1 + \frac{4|\bar\alpha|C_1}{\sqrt{(\bar \beta)^2-4\bar\alpha\bar\nu}} \varepsilon}.
\end{align*}
We can now use the inequality $\sqrt{1+x} \leq 1+x$ for $x \geq 0$ to say that 
\begin{align*}
    \sqrt{1 + \frac{4|\bar\alpha|C_1}{\sqrt{(\bar \beta)^2-4\bar\alpha\bar\nu}} \varepsilon} \leq 1 + \frac{4|\bar\alpha|C_1}{\sqrt{(\bar \beta)^2-4\bar\alpha\bar\nu}} \varepsilon,
\end{align*}
which implies since $\bar\alpha<0$,
\begin{align*}
&x_1 - \bar{f}_0 \\
    &=\frac{-\bar\beta - \sqrt{(\bar \beta)^2-4\bar\alpha(\bar\nu+C_1\varepsilon)}}{2\bar\alpha} - \bar{f}_0 \\ 
    &= -\frac{\bar\beta}{2\bar\alpha} - \frac{\sqrt{(\bar \beta)^2-4\bar\alpha\bar\nu} \sqrt{1 + \frac{4|\bar\alpha|C_1}{\sqrt{(\bar \beta)^2-4\bar\alpha\bar\nu}} \varepsilon}}{2\bar\alpha} - \bar{f}_0 \\
    &\leq -\bar{f}_0 - \frac{\bar\beta}{2\bar\alpha} - \frac{\sqrt{(\bar \beta)^2-4\bar\alpha\bar\nu}}{2\bar\alpha} + 2C_1 \varepsilon \\
    &=  2C_1 \varepsilon.
\end{align*}
Since $g \in [x_0,x_1]$, we can now guarantee that for $\varepsilon \leq \sqrt{(\bar \beta)^2-4\bar\alpha\bar\nu}/(4|\bar\alpha|C_1)$,
\begin{align*}
    |g-\bar{f}_0| \leq 2C_1\varepsilon.
\end{align*}
The same statement applies to the case $\bar\alpha>0$.
We have therefore shown that there exist constants $C_2,C_3>0$ so that if $\varepsilon \leq C_2$, then $|g-\bar{f}_0| \leq C_3 \varepsilon$. 
It remains to be shown that the constants $C_2,C_3$ can be selected so that they only depend on
$\lambda_0$, $\lambda_1$, $\mu_0$, $\mu_{\rm max}$, $\nu_0$ and $\nu_{\rm min}$.
For $C_3$, this is easy, since we can set $C_3 := 2C_1$ and we know that $C_1 = 1+3(\lambda_0+\lambda_1)+(\mu_{\rm max}+\nu_0)$.
For $C_2$, note first that $|\bar\alpha| \leq \lambda_0+\lambda_1$ and note next that
\begin{align*}
    & (\bar \beta)^2-4\bar\alpha\bar\nu = \left(\lambda_1 - \bar\lambda_0 + \bar\mu + \bar\nu\right)^2-4(\lambda_1 - \bar\lambda_0)\bar\nu.
\end{align*}
Consider the polynomial $Q(x) := (x+\bar\mu+\bar\nu)^2-4x\bar\nu$.
This polynomial takes the smallest value at $x=\bar\nu-\bar\mu$, and the smallest value is $4\bar\mu\bar\nu \geq 4\mu_0\nu_{\rm min}$.
Recalling that we also need $C_1 \varepsilon \leq \bar\nu$, we set
\begin{align*}
    &C_2 := \frac1{1+3(\lambda_0+\lambda_1)+(\mu_{\rm max}+\nu_0)} \min\left\{\frac{2\sqrt{\mu_0\nu_{\rm min}}}{\lambda_0+\lambda_1},\nu_{\rm min}\right\}.
\end{align*}
Since $C_1 = 1+3(\lambda_0+\lambda_1)+(\mu_{\rm max}+\nu_0)$, we have that
\begin{align*}
    C_2 \leq \frac1{C_1} \min\left\{\frac{\sqrt{(\bar \beta)^2-4\bar\alpha\bar\nu}}{4|\bar\alpha|},\bar\nu\right\},
\end{align*}
as desired.

\subsection{Transient phase of the optimal treatment} \label{app:transient}

\subsubsection{General case}

Assume that at the start of treatment, the proportion of sensitive cells is $\bar{f}_0(0)$.
The objective of our optimal control problem
is to minimize \eqref{eq:optcontobjsuppl}.
Both for the case of linear and uniformly induced tolerance,
there is an optimal long-run growth rate $\sigma^\ast$
and an associated equilibrium proportion $\bar{f}_0^\ast$.
Instead of minimizing \eqref{eq:optcontobjsuppl}, we can minimize
\begin{align*}
    \int_0^T (u(c,f_0)-\sigma^\ast)dt,
\end{align*}
and since the growth rate under the optimal policy is eventually $\sigma^\ast$, we have that
\begin{align*}
    \int_0^T (u(c,f_0)-\sigma^\ast)dt = \int_0^\infty (u(c,f_0)-\sigma^\ast)dt,
\end{align*}
given that $T$ is sufficiently large.
During the transient phase, we can assume that $f_0(t)$ is strictly decreasing.
The reason is that if there are two times $t_1<t_2$ such that $f_0(t_1) = f_0(t_2)$ and $\int_{t_1}^{t_2} (u(c,f_0) - \sigma^\ast) dt < 0$, we can repeat the segment between $t_1$ and $t_2$ indefinitely to obtain a treatment with a long-run average growth rate less than $\sigma^\ast$, which is a contradiction.
If $\int_{t_1}^{t_2} (u(c,f_0) - \sigma^\ast) dt \geq 0$, we can remove the segment between $t_1$ and $t_2$ without increasing the integral $\int_0^\infty (u(c,f_0)-\sigma^\ast)dt$. 
Assuming that $f_0(t)$ is strictly decreasing, we can apply the following coordinate change:
\begin{align*}
    \int_0^\infty (u(c(t),f_0(t))-\sigma^\ast)dt &= \int_{\bar{f}_0(0)}^{\bar{f}_0^\ast} \frac{u(c, f_0) - \sigma^\ast}{{df_0}/{dt}}df_0 = \int_{\bar{f}_0^\ast}^{\bar{f}_0(0)} \frac{u(c, f_0) - \sigma^\ast}{-{df_0}/{dt}}df_0.
\end{align*}
To minimize this integral, we can for each fixed $f_0$ minimize
\begin{align*}
    \frac{u(c, f_0) - \sigma^\ast}{-f_0'}
\end{align*}
with respect to $c$, or equivalently maximize
\begin{align*}
    \frac{\sigma^\ast-u(c, f_0)}{-f_0'}
\end{align*}
with respect to $c$. 
Since $f_0(t)$ is strictly decreasing, we can apply the constraint that $f_0' < 0$.

\subsubsection{Linear induction of tolerance}

For linear induction of tolerance, we wish to reformulate the condition $f_0'<0$ in terms of $c$. By \eqref{eq:df0ftsuppl}, we have that
\begin{align*}
        f_0' &= \big(\lambda_1-\lambda_0(c)\big)f_0^2 - \big(\lambda_1-\lambda_0(c)+\mu(c)+\nu(c)\big)f_0+\nu(c),
\end{align*}
which we can write explicitly as
\begin{align*}
    f_0' &= \left(\lambda_1-\lambda_0+\Delta d_0 \frac{c}{c+1}\right) f_0^2 - \left(\lambda_1 -\lambda_0+\Delta d_0 \frac{c}{c+1} + \mu_0+kc + \nu_0-mc\right)f_0 + \nu_0-mc \\
    &= \Delta d_0 f_0(f_0-1) \frac{c}{c+1} + (mf_0-kf_0-m)c + \left((\lambda_1-\lambda_0)f_0(f_0-1) - (\mu_0+\nu_0)f_0+\nu_0\right)\\
    &= \frac{\Delta d_0 f_0(f_0-1) c + (mf_0-kf_0-m)c(c+1) + \left((\lambda_1-\lambda_0)f_0(f_0-1) - (\mu_0+\nu_0)f_0+\nu_0\right)(c+1)}{c+1}.
\end{align*}
Note that
\begin{align*}
    &\Delta d_0 f_0(f_0-1) c + (mf_0-kf_0-m)c(c+1) + \left((\lambda_1-\lambda_0)f_0(f_0-1) - (\mu_0+\nu_0)f_0+\nu_0\right)(c+1) \\
    &= (m(f_0-1)-kf_0)c^2 \\
    &\quad+ \left(m(f_0-1)-kf_0+\Delta d_0 f_0(f_0-1) + (\lambda_1-\lambda_0)f_0(f_0-1) - (\mu_0+\nu_0)f_0+\nu_0\right)c \\
    &\quad+ \left((\lambda_1-\lambda_0)f_0(f_0-1) - (\mu_0+\nu_0)f_0+\nu_0
    \right).
\end{align*}
If we now set 
\begin{align*}
    & A := m(f_0-1)-kf_0, \\
    & B := \Delta d_0 f_0(f_0-1), \\
    & D := (\lambda_1-\lambda_0)f_0(f_0-1) - (\mu_0+\nu_0)f_0+\nu_0, 
\end{align*}
we can write
\begin{align*}
    & f_0' = \frac{Ac^2 + (A+B+D)c + D}{c+1}.
\end{align*}
The equation $Ac^2 + (A+B+D)c+D=0$ has solutions
\begin{align*}
    c = \frac{-(A+B+D) \pm \sqrt{(A+B+D)^2-4AD}}{2A}.
\end{align*}
Since $k,m \geq 0$, with either $k \neq 0$ or $m \neq 0$, and $0 < f_0 < 1$, we have $A < 0$, and for $f_0 \leq \bar{f}_0(0)$, we have $D \geq 0$. This implies that $AD \leq 0$. 
If $D=0$, we get that $f_0'<0$ if and only if $Ac^2+(A+B)c<0$, which is equivalent to $c>-(A+B)/A$ with the restriction that $c \geq 0$. 
In this case, we take $c_{\rm min}(f_0) := -(A+B)/A$.
If $D>0$, then $Ac^2+(A+B+D)c+D=0$ has one positive and one negative solution. Let $c_{\rm min}(f_0)$ denote the positive solution.
Then with the restriction that $c \geq 0$, the condition $f_0'<0$ is equivalent to $c > c_{\rm min}(f_0)$.

\subsubsection{Uniform induction of tolerance}

Here, we want to show that under uniform induction of tolerance, the maximum dose $c_{\rm max}$ is optimal during the transient phase.
First, we note that if the treatment is started at $\bar{f}_0(0)$, we can exclude the possibility that $c=0$ is optimal since this will lead to $f_0' \geq 0$.
It therefore suffices to consider doses $c>0$.
Using \eqref{eq:df0ftsuppl} and \eqref{eq:instgrowthratesuppl}, we can explicitly write
\begin{align*} 
    \frac{\sigma^\ast-u(c,f_0)}{-f_0'} &= \frac{\sigma^\ast - (\lambda_0(c)-\lambda_1)f_0 - \lambda_1}{\big(\lambda_0(c)-\lambda_1\big)f_0^2 + \big(\lambda_1-\lambda_0(c)+\mu(c)+\nu(c)\big)f_0 - \nu(c)} \\
    &= \frac{\lambda_0(c) f_0 + \big(\lambda_1(1-f_0) - \sigma^\ast\big)}{\lambda_0(c) f_0(1-f_0) + \big(\nu(c)(1-f_0) - \mu(c) f_0 - \lambda_1 f_0 (1-f_0)\big)}
\end{align*}
In this case, we have $\mu(c) = \mu_{\rm max}$ and $\nu(c) = \nu_{\rm min}$ for $c>0$.
Therefore, if we set
\begin{align*}
    & a := f_0, \\
    & b := \lambda_1(1-f_0) - \sigma^\ast, \\
    & d := f_0(1-f_0), \\
    & e := \nu_{\rm min}(1-f_0) - \mu_{\rm max} f_0 - \lambda_1f_0(1-f_0),
\end{align*}
we can write for $c>0$:
\begin{align} \label{eq:transientuniformlyinduced}
    \frac{\sigma^\ast-u(c,f_0)}{-f_0'} &= \frac{a\lambda_0(c)+b}{d\lambda_0(c)+e}.
\end{align}
This leads us to studying the function
\begin{align*}
    & f(x) := \frac{ax+b}{dx+e}.
\end{align*}
Since $f'(x) = (ae-bd)/(dx+e)^2$, this function is either strictly increasing, strictly decreasing or constant depending on the sign of $ae-bd$.
We wish to show that for any $f_0 > \bar{f}_0^\ast$, we have $ae-bd<0$, meaning that $f(x)$ is strictly decreasing in $x$.
This in turns means that the largest value of \eqref{eq:transientuniformlyinduced} is attained when $\lambda_0(c)$ is smallest, which occurs at $c = c_{\rm max}$.

To analyze $ae-bd$, we begin by computing:
\begin{align*}
    ae-bd &= f_0 \big(\nu_{\rm min}(1-f_0) - \mu_{\rm max} f_0 - \lambda_1f_0(1-f_0)\big) - f_0(1-f_0) \big(\lambda_1(1-f_0) - \sigma^\ast\big) \\
    &= \nu_{\rm min} f_0 (1-f_0) - \mu_{\rm max} f_0^2 - \lambda_1 f_0^2 (1-f_0) - \lambda_1 f_0 (1-f_0)^2 + \sigma^\ast f_0(1-f_0) \\
    &= \nu_{\rm min} f_0 (1-f_0) - \mu_{\rm max} f_0^2 - \lambda_1 f_0 (1-f_0) + \sigma^\ast f_0(1-f_0) \\
    &= \big(\sigma^\ast-\lambda_1+\nu_{\rm min}\big) f_0(1-f_0) - \mu_{\rm max} f_0^2.
\end{align*}
If $\sigma^\ast-\lambda_1+\nu_{\rm min} \leq 0$, then $ae-bd \leq -\mu_{\rm max} f_0^2 < 0$ whenever $f_0 > \bar{f}_0^\ast$ and we are done. For the case $\sigma^\ast-\lambda_1+\nu_{\rm min} > 0$, we define 
\begin{align*}
    & m := \sigma^\ast-\lambda_1+\nu_{\rm min}, \\
    & n := \mu_{\rm max}
\end{align*}
with $m,n > 0$, and we further define the function
\begin{align*}
    g(x) := mx(1-x)-nx^2 = -(m+n)x^2+mx.
\end{align*}
Note that $g$ is a second-degree polynomial with roots $x=0$ and $x = \frac{m}{m+n}$.
Since $m+n>0$, 
we know
that $g(x) > 0$ for $x \in [0,m/(m+n))$ and $g(x) < 0$ for $x \in (m/(m+n),1]$.
If we are able to show that
\begin{align} \label{eq:toshow}
    \bar{f}_0^\ast \geq \frac{m}{m+n} = \frac{\sigma^\ast-\lambda_1+\nu_{\rm min}}{\sigma^\ast-\lambda_1+\mu_{\rm max}+\nu_{\rm min}},
\end{align}
it will follow that for any $f_0 > \bar{f}_0^\ast$, we have $ae-bd<0$ as desired.

To establish \eqref{eq:toshow}, note first that by Lemma \ref{lemma1},
\begin{align*}
    \frac{\sigma^\ast-\lambda_1+\nu_{\rm min}}{\sigma^\ast-\lambda_1+\mu_{\rm max}+\nu_{\rm min}} &= \frac{(\bar{\lambda}_0(c_{\rm max},\varphi^\ast)-\lambda_1) \bar{f}_0^\ast + \nu_{\rm min}}{(\bar{\lambda}_0(c_{\rm max},\varphi^\ast)-\lambda_1)\bar{f}_0^\ast+\mu_{\rm max}+\nu_{\rm min}},
\end{align*}
where
\begin{align*}
    \bar\lambda_0(c,\varphi) = \varphi \lambda_0(c) + (1-\varphi)\lambda_0,
\end{align*}
as defined in \eqref{eq:avgratessuppl}.
Next note that by \eqref{eq:df0ftrapidpulsesuppl},
\begin{align*} 
    & (\bar{\lambda}_0(c_{\rm max},\varphi^\ast)-\lambda_1)((\bar{f}_0^\ast)^2-\bar{f}_0^\ast) + \bar\mu(c_{\rm max},\varphi^\ast) \bar{f}_0^\ast -\bar\nu(c_{\rm max},\varphi^\ast) (1-\bar{f}_0^\ast) \\
    &= \big(\bar\lambda_0(c_{\rm max},\varphi^\ast)-\lambda_1\big)(\bar{f}_0^\ast)^2 - \big(\bar\lambda_0(c_{\rm max},\varphi^\ast)-\lambda_1-\bar\mu(c_{\rm max},\varphi^\ast)-\bar\nu(c_{\rm max},\varphi^\ast)\big)\bar{f}_0^\ast-\bar\nu(c_{\rm max},\varphi^\ast) \\
    &= 0.
\end{align*}
Since
\begin{align*}
    &  \bar\mu(c_{\rm max},\varphi^\ast) = \varphi^\ast \mu_{\rm max} + (1-\varphi^\ast)\mu_0 = \mu_{\rm max} + (1-\varphi^\ast)(\mu_0-\mu_{\rm max}), \\
    &  \bar\nu(c_{\rm max},\varphi^\ast) = \varphi^\ast \nu_{\rm min} + (1-\varphi^\ast)\nu_0 = \nu_{\rm min} + (1-\varphi^\ast)(\nu_0-\nu_{\rm min}),
\end{align*}
it follows that
\begin{align*}
    & (\bar{\lambda}_0(c_{\rm max},\varphi^\ast)-\lambda_1)((\bar{f}_0^\ast)^2-\bar{f}_0^\ast) + \mu_{\rm max} \bar{f}_0^\ast -\nu_{\rm min} (1-\bar{f}_0^\ast) \\
    &= (1-\varphi^\ast)(\mu_{\rm max}-\mu_0) \bar{f}_0^\ast + (1-\varphi^\ast)(\nu_0-\nu_{\rm min}) (1-\bar{f}_0^\ast) \\
    &\geq 0,
\end{align*}
which implies that
\begin{align*}
    \frac{(\bar{\lambda}_0(c_{\rm max},\varphi^\ast)-\lambda_1) \bar{f}_0^\ast + \nu_{\rm min}}{(\bar{\lambda}_0(c_{\rm max},\varphi^\ast)-\lambda_1)\bar{f}_0^\ast+\mu_{\rm max}+\nu_{\rm min}} \leq \bar{f}_0^\ast.
\end{align*}
This concludes the proof.

\subsection{Pulsed schedules} \label{app:pulsed}

\subsubsection{More rapid pulses give at least as low long-run growth rate}

We show that any pulsed treatment $T_1 := (c, t_{\rm cycle}, \varphi_1)$ can be modified to get an equally good or better pulsed treatment $T_2 := (c, t_{\rm cycle} / 2, \varphi_2)$. We assume $T_1$ is started so that $f_0$ is already in equilibrium, i.e., $f_0$ is a periodic function of time with period $t_{\rm cycle}$. Assume the treatment is started in such a way that the drug is applied from time $t = 0$ to time $t = \varphi_1 t_{\rm cycle}$ and then not applied from time $t = \varphi_1t_{\rm cycle}$ to time $t_{\rm cycle}$. Then, $f_0$ starts at some $y_1$ at time $t = 0$, changes monotonically to $y_2$ over $\varphi_1t_{\rm cycle}$ units of time, and then changes monotonically back to $y_1$ over $(1 - \varphi_1)t_{\rm cycle}$ units of time. There exists some $y_m$ between $y_1$ and $y_2$ such that the combined time it takes for $f_0$ to go from $y_1$ to $y_m$ and from $y_m$ to $y_1$ is exactly $t_{\rm cycle} / 2$. Then the combined time for $f_0$ to go from $y_m$ to $y_2$ and from $y_2$ to $y_m$ is exactly $t_{\rm cycle} / 2$ as well. Let $t_1$ and $t_2$ ($t_1,t_2 < t_{\rm cycle})$ be times in the first period where $f_0 = y_m$. Since $T_1$ is started in equilibrium, long-run average growth rate is simply the average growth rate over the time-interval $[0, t_{\rm cycle}]$, which is in turn the average of the growth rates over the sets $[0, t_1] \cup [t_2, t_{\rm cycle}]$ and $[t_1, t_2]$. The average growth rate over one of these sets must be at least as good as the long-run growth rate. If the former is better then we can simply choose $T_2$ to be the treatment where we apply dose $c$ for time $t_1$ and then don't apply the dose from time $t_2$, that is, $T_2 = (c, t_{\rm cycle} / 2, 2t_1 / t_{\rm cycle})$. If the latter is better we can similarly chose $T_2$ to be the treatment where we apply the drug for time $\varphi t_{\rm cycle} - t_1$ and do not apply the drug for time $t_2 - \varphi t_{\rm cycle}$, that is $T_2 = (c, t_{\rm cycle} / 2, 2\varphi - 2t_1/t_{\rm cycle})$.

\subsubsection{Once pulses are rapid enough, the long-run growth rate is insensitive to the cycle length}

Here we show that if the treatment pulses are fast enough for the graph of $f_0$ to be approximately linear during both the on-drug and off-drug phase of each treatment cycle, then the cycle duration does not matter as long as the average $f_0$ is the same.

Assume we have two different pulsed dosing schedules $T_1 = (c, t_{{\rm cycle},1}, \varphi)$ and $T_2 = (c, t_{{\rm cycle},2}, \varphi) $ that are applied in a scenario where they have the same average value of $f_0$ and in which $f_0$ has reached equilibrium, meaning that $f_0$ is periodic. Additionally assume that $f_0$ behaves linearly during both the on-drug and off-drug phase in each case.
Since the lines are approximations of the true model, we should assume the corresponding line segments have the same slope. Let $y_1$ and $y_2$ be the lowest and largest values of $f_0$ under schedule $T_1$, and similarly, let $y_3$ and $y_4$ be the smallest and largest values under schedule $T_2$. Let  $[t_1, t_3]$ be a period of the treatment with schedule $T_1$ such that the drug is administered  during the interval $[t_1, t_2]$ and is not administered during the interval $[t_2, t_3]$. Under the above assumptions, the average growth rate using schedule $T_1$ is
\begin{align*}
\varphi\frac{\int_{t_1}^{t_2} ((\lambda_0(c) - \lambda_1)f_0 + \lambda_1) \ dt}{t_2 - t_1} + (1 - \varphi)\frac{\int_{t_2}^{t_3} ((\lambda_0(0) - \lambda_1)f_0 + \lambda_1) \ dt}{t_3 - t_2}
\end{align*}
but since $f_0$ is approximately linear on these intervals this is is equal to
\begin{align*}
 &   \varphi \left((\lambda_0(c) - \lambda_1)\frac{(t_2 - t_1)y_m}{t_2 - t_1} + \lambda_1\right)
+
(1 -\varphi) \left((\lambda_0(0) - \lambda_1)\frac{(t_3 - t_2)y_m}{t_3 - t_2} + \lambda_1\right) \\
&= \varphi(\lambda_0(c) - \lambda_1)y_m + (1 - \varphi)(\lambda_0(0) - \lambda_1)y_m + \lambda_1,
\end{align*}
which would be the same if we did the same calculations for $T_2$.

\subsection{Implementation of Russo et al.~model} \label{app:russo}

\subsubsection{Model and parametrization}

In Russo et al.~\citesupp{russo2022modified}, the authors use their experimental data to parametrize a mathematical model involving two cell types, sensitive and tolerant. 
The drug effect on the sensitive cell death rate $d_0$ is modeled using a function of the form
\begin{align*}
    d_0(c) = d_0 + (d_{\rm max}-d_0) (1-\exp(-ac)) = d_0 + \Delta d_0 (1-\exp(-ac)),
\end{align*}
where $c$ is the drug dose measured in $M$.
Under this model, the dose which has half the maximal effect on the sensitive cell death rate is given by ${\rm EC}_{50}^d = \log(2)/a$.
If we measure the drug dose as a proportion of
the ${\rm EC}_{50}^d$ dose, we can rewrite the above function as
\begin{align} \label{eq:drugeffectrusso}
    d_0(c) = d_0 + \Delta d_0(1-\exp(-c\log(2))).
\end{align}
We note that this version of $d_0(c)$ is concave in $c$, 
same as the functional form we assume in Section \ref{sec:drugeffectprolif} of the main text.

For the WiDr cell line, Russo et al.~infer that the transition rate $\mu(c)$ from sensitivity to tolerance is given by the linear function
\begin{align*}
    \mu(c) = k'c,
\end{align*}
where $c$ is the drug dose measured in $\mu M$.
If we again measure the drug dose as a proportion of the ${\rm EC}_{50}^d$ dose and define $k := k' {\rm EC}_{50}$, we can rewrite this function as
\begin{align*}
    \mu(c) = kc.
\end{align*}
With this modeling setup, the authors in \citesupp{russo2022modified} use their experimental data to derive the parameter estimates $\lambda_0 = b_0-d_0 = 0.048$, $a = 2.91495 \cdot 10^6$,  $\Delta d_0 = 1.095$, $\lambda_1-\nu_0 = -0.073$ and $k' = 136935$ for the WiDr B7 clone.
For these estimates, the ${\rm EC}_{50}^d$ dose is given by
\[
{\rm EC}_{50}^d = \frac{\log(2)}{2.91495 \cdot 10^6} = 2.3779 \cdot 10^{-7}\;M,
\]
and the rescaled slope $k$ is given by
\[
k = 136935 \cdot {\rm EC}_{50}^d = 0.03256.
\]

\subsubsection{Implementation of forward-backward sweep method}

To solve the optimal control problem minimizing \eqref{eq:optcontobjsuppl} 
using the drug effect function in \eqref{eq:drugeffectrusso}, we can apply the forward-backward sweep method as laid out in Section \ref{app:fbsm} with one modification.
In the fifth step, we have to solve $\partial {\cal H}/\partial c= 0$.
We again have that
\[
\frac{\partial {\cal H}}{\partial c} = (f_0-\gamma f_0^2+\gamma f_0) \frac{\partial \lambda_0}{\partial c} - \gamma f_0 \frac{\partial \mu}{\partial c} + \gamma (1-f_0) \frac{\partial \nu}{\partial c},
\]
but we must modify our calculation of $\partial \lambda_0/\partial c$. In this case, we have 
\begin{align*}
    \lambda_0(c) = \lambda_0 - \Delta d_0(1-\exp(-c\log(2))),
\end{align*}
which implies that
\[
\frac{\partial \lambda_0}{\partial c} = - \log(2) (\Delta d_0) \exp(-c\log(2)),
\]
and we obtain the equation
\[
 - \log(2) (\Delta d_0) \exp(-c\log(2)) (f_0-\gamma f_0^2+\gamma f_0) - \gamma f_0 k = 0.
\]
If we now set
\begin{align*}
& a(f_0,\gamma) := \log(2) (\Delta d_0) f_0 (\gamma f_0 - 1 - \gamma), \\
& b(f_0,\gamma) := -\gamma f_0 k,
\end{align*}
we obtain
\[
\exp(-c\log(2)) = -\frac{b}a,
\]
which yields $c = -\log(-b/a)/\log(2)$. This is the solution we use in the fifth step.

\subsubsection{Natural transitions between phenotypes}

In Russo et al.~\citesupp{russo2022modified}, the authors assume 
no transitions from sensitivity to tolerance 
in the absence of drug, $\mu_0=0$.
For the WiDr cell line (clone B7), their point estimate for the proportion of sensitive cells at the start of treatment is 0.992, which would indicate transitions between phenotypes before treatment.
However, the proportion is indistinguishable from zero when statistical uncertainty is taken into account.
The authors point out that transitions from tolerance to sensitivity may occur even in the presence of drug, but that these transitions would effectively contribute to the death rate of tolerant cells.
Overall, it is uncertain to what extent cells transition between sensitivity and tolerance in the absence of drug.

To be consistent with our modeling framework, we assume transitions between types do occur in the absence of drug, so that they lead to an equilibrium proportion $\bar{f}_0(0) = 0.992$ of sensitive cells in the absence of drug.
This leads us to take $\mu_0 = 0.001$, $\lambda_1=0$ and $\nu_0 = 0.073$, 
which is still consistent with Russo et al.'s finding that drug tolerance is primarily induced by the drug and that $\lambda_1-\nu_0 = -0.073$.
A summary of the parameter values used for the analysis in Section \ref{sec:russo} of the main text is given in Table \ref{table:parametersrusso}.

\begin{table}[h]
\centering
\begin{tabular}{|c|c|c|c|c|c|c|c|}
\hline
Parameter&$\lambda_0$ & $\Delta d_0$& $\lambda_1$& $\mu_0$& $\nu_0$&$k$ \\
\hline
Value&0.048&1.095&0&0.001&0.073&0.03256 \\
\hline
\end{tabular}
\caption{Parameter values used in Section \ref{sec:russo} of the main text.}
\label{table:parametersrusso}
\end{table}

\subsection{Connection with previous work}
\label{sec:previouswork}

In this section, we discuss the connection of our work with previous work, which is reviewed in more detail in \citesupp{gunnarsson2025}.
Kuosmanen et al.~\citesupp{kuosmanen2021drug} studied optimal dosing under linearly induced irreversible transitions $\mu(c)$ from sensitivity to resistance, 
and they sought to minimize the probability of resistance forming under treatment given an initially drug-sensitive tumor.
They showed it 
was best to administer constant low doses even under a modest level of drug-induced resistance, which is consistent with our results.
Greene et al.~\citesupp{greene2019mathematical,greene2020mathematical} considered a Lotka-Volterra competition model between sensitive and resistant cells, where they assumed that both $d_0(c)$ and $\mu(c)$ were linear in $c$, with transitions from sensitivity to resistance once again being irreversible.
In \citesupp{greene2019mathematical}, they showed that intermittent treatment can be preferable to constant treatment under drug-induced resistance, although this comparison was based on a single possible dose value.
In \citesupp{greene2020mathematical}, the authors studied the same model using optimal control theory, where the goal was to maximally delay tumor recurrence.
The optimal dosing strategies varied significantly with the specific model parameter values, involving periods of maximum dosing, no dosing, and small constant dosing.
We note that Greene et al.'s assumption of competition between sensitive and resistant cells fundamentally changes the model dynamics.
Under this assumption, it becomes counterproductive to apply large drug doses even when the drug does not induce resistance, since aggressive strategies release resistant cells from competition with sensitive cells.
Investigating how these dynamics may influence our optimal dosing strategies is an important direction for future inquiry.

In Angelini et al.~\citesupp{angelini2022model}, the authors studied a model of phenotypic switching where both 
$d_0(c)$ and $\mu(c)$
were of the form $c \mapsto k(1+e^{-r(c-s)})^{-1}$, which has a sigmoidal shape.
They only considered constant-dose schedules.
Their results indicate that when ${\rm EC}_{50}^d \approx {\rm EC}_{50}^\mu$, meaning that the ${\rm EC}_{50}$ dose is similar for $d_0(c)$ and $\mu(c)$, applying the maximum dose $c_{\rm max}$ is optimal.
However, when ${\rm EC}_{50}^\mu \gg {\rm EC}_{50}^d$, lower doses become optimal.
This is consistent with our results, since under the case ${\rm EC}_{50}^\mu \gg {\rm EC}_{50}^d$, $\mu(c)$ can be approximated by a linear function in $c$, as discussed in Section \ref{sec:drugeffectswitching}.
We have also shown that when ${\rm EC}_{50}^\mu \ll {\rm EC}_{50}^d$, in which case $\mu(c)$ can be considered approximately constant in $c$ for $c>0$, it is optimal to apply the maximum dose continuously.
In combination, our work and \citesupp{angelini2022model} indicate that drug-induced tolerance does not lead to optimality of small doses in and of itself.
Our work furthermore shows that if the drug inhibits the transition rate $\nu(c)$ from tolerance to sensitivity, applying the maximum dose continuously is generally suboptimal, irrespective of the relationship between ${\rm EC}_{50}^d$ and ${\rm EC}_{50}^\nu$.
In general, 
our work illustrates the importance of understanding the exact nature of drug-induced resistance before treatment recommendations can be made.

One of the most similar works to ours
is Akhmetzhanov et al.~\citesupp{akhmetzhanov2019modelling}, where the authors study a model of resistance in BRAF-mutant melanoma,
involving two mutually inhibitory biological pathways.
Their baseline model has three cell types, corresponding to an active main pathway (type-1), active alternative pathway (type-2), or both pathways inactive (type-0), with reversible transitions $1 \leftrightarrow 0 \leftrightarrow 2$.
With some simplifying assumptions, the authors reduce their model to a two-type model involving sensitive and resistant cells, where the rates $d_0(c)$, $\mu(c)$ and $\nu(c)$ are dose-dependent.
However, 
the transitions between states emerge from an underlying model of a particle undergoing Brownian motion inside a double-well potential, 
which results in $d_0(c)$, $\mu(c)$ and $\nu(c)$ having more complex forms than we have assumed.
Nevertheless, Akhmetzhanov et al.~\citesupp{akhmetzhanov2019modelling} arrive at an optimal strategy similar to the one we derived for linearly induced tolerance (Cases II and III), where large initial doses are followed by a small constant dose.
The authors note that the constant dose appears to maintain the population at an optimal composition and that the goal of the optimal treatment appears to be to reach this composition as quickly as possible.
Our work shows that the composition is optimal in the sense that it minimizes the equilibrium tumor growth rate $\sigma(c)$, and it offers a simple way of computing the equilibrium dose through minimization of $\sigma(c)$.
It also shows that the optimal strategy does not necessarily steer the tumor into the desired long-run composition as quickly as possible.
This is especially true when the drug only influences transitions from sensitivity to resistance, in which case it is optimal to apply a small dose throughout.
Rather, the optimal strategy must balance the trade-off between cell kill and tolerance induction in a way we have precisely quantified.

It is finally worth mentioning a recent work by Cassidy et al.~\citesupp{cassidy2021role}, where it is assumed that the probability of switching between a sensitive and tolerant state depends on the cell's age.
The authors do not assume that the drug influences the switching dynamics,
but rather that drug-tolerant cells cooperate to divide faster
with increasing frequency in the population.
Similar to the model of Greene et al., it is counterproductive to maximally kill the drug-sensitive cells in this case, as it triggers cooperation between the tolerant cells. 
In the context of non-small cell lung cancer \citesupp{craig2019cooperative},
Cassidy et al.~show that an adaptive therapy
aimed at limiting the frequency of tolerant cells in the population
outperforms a pulsed schedule applying a fixed dose every 7 days.
While the authors do not investigate optimal dosing strategies, 
it is interesting to note that their adaptive therapy
ultimately maintains the tumor 
around a certain
composition between sensitive and tolerant cells.
The principle of an optimal composition between sensitive and tolerant cells may therefore continue to apply even if our model is extended to account for potential 
group behavior.

\bibliographystylesupp{unsrt}
\bibliographysupp{epi}

\appendix

\makeatletter
\renewcommand\thefigure{S\@arabic\c@figure}

\setcounter{figure}{0}  
\setcounter{table}{0}  

\clearpage
\thispagestyle{empty}

\begin{figure}
    \centering
    \includegraphics[scale=1]{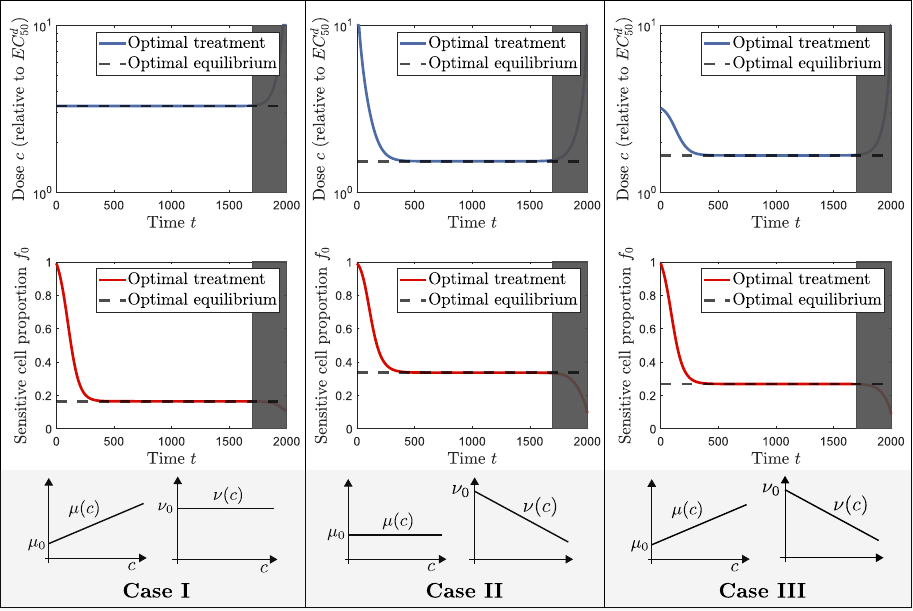}
    \caption{
    {\bf Effect of extending treatment horizon on optimal dosing strategies.}
    When the treatment horizon in Figure \ref{fig:optimallinear} is extended, the period during which the constant dose $c^\ast$ is applied becomes extended.
    }
    \label{fig:S1}
\end{figure}

\begin{figure}
    \centering
    \includegraphics[scale=1]{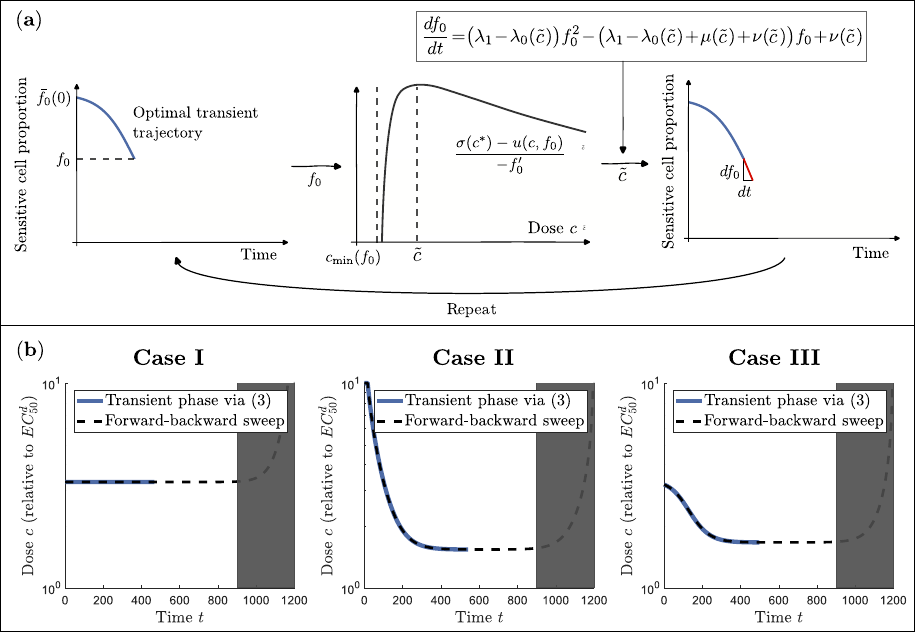}
    \caption{{\bf Understanding the transient phase of the optimal treatment.}
    {\bf (a)} 
    During the transient phase, whenever the proportion of sensitive cells is $f_0$, the optimal dose to give at that moment is the one maximizing the ratio $(\sigma(c^\ast)-u(c,f_0))/(-f_0')$, where $u(c,f_0)$ is the instantaneous tumor growth rate.
    This ratio precisely quantifies how the optimal treatment must balance cell kill and tolerance induction during the initial stages of treatment.
    It also gives a simple method for computing the transient phase, where 
    the ratio $(\sigma(c^\ast)-u(c,f_0))/(-f_0')$ is maximized and the value for $f_0$ is updated iteratively.
    {\bf (b)} The transient phase computed using the approach laid out in (a) (blue lines) agrees with the transient phase computed using the forward-backward sweep method (dotted lines).
    }
    \label{fig:transient_steering}
\end{figure}

\begin{figure}
    \centering
    \includegraphics[scale=1]{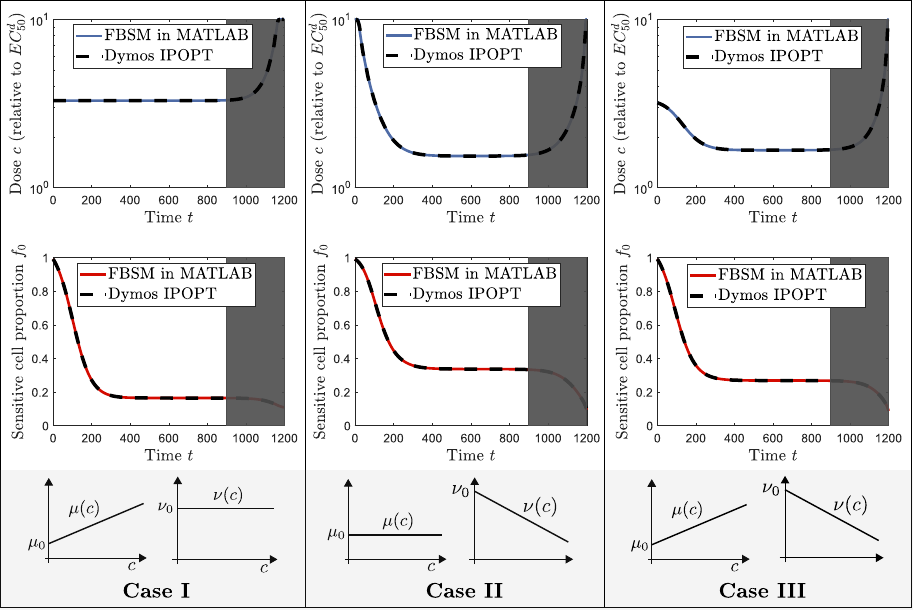}
    \caption{
    {\bf Optimal dosing strategies obtained using the Dymos optimal control library in Python.}
    The optimal dosing strategies computed using our implementation of the forward-backward sweep method (FBSM) in MATLAB are consistent with computations obtained using the Dymos optimal control library in Python (which is built on top of the 
    OpenMDAO optimization framework) together with the IPOPT solver.
    The Python code used to perform the Dymos computations is available in the Github respository for the paper.
    }
    \label{fig:dymos}
\end{figure}

\end{document}